\documentclass{article}

\usepackage[round,authoryear]{natbib}

\usepackage[utf8]{inputenc}  
\usepackage[T1]{fontenc}     
\usepackage{hyperref}        
\usepackage{url}             
\usepackage{booktabs}        
\usepackage{amsfonts}        
\usepackage{nicefrac}        
\usepackage{microtype}       
\usepackage{xcolor}          
\usepackage{adjustbox}
\usepackage{algorithm}
\usepackage{algorithmic}
\usepackage{amsmath}
\usepackage{graphicx}
\usepackage{subcaption}
\usepackage{float}
\usepackage{amsthm}
\usepackage[margin=1in]{geometry}   
\usepackage{multirow}
\usepackage{makecell}
\usepackage{wrapfig}
\usepackage{arydshln}

\newtheorem{theorem}{Theorem}[section]

\newtheorem{lemma}[theorem]{Lemma}

\newtheorem{assumption}[theorem]{Assumption}

\usepackage{amsmath,amsfonts,bm}

\def\eqref#1{equation~\ref{#1}}

\def\1{\bm{1}}

\def\vtheta{{\bm{\theta}}}

\def\vh{{\bm{h}}}

\def\vx{{\bm{x}}}
\def\vy{{\bm{y}}}

\DeclareMathAlphabet{\mathsfit}{\encodingdefault}{\sfdefault}{m}{sl}
\SetMathAlphabet{\mathsfit}{bold}{\encodingdefault}{\sfdefault}{bx}{n}

\def\gD{{\mathcal{D}}}

\def\gL{{\mathcal{L}}}

\def\gQ{{\mathcal{Q}}}

\newcommand{\ds}{\gD}

\newcommand{\lData}{\vx}                 
\newcommand{\lLable}{\vy}                
\newcommand{\lAct}{\vh}               

\newcommand{\layerNum}{N}
\newcommand{\bNum}{L}

\newcommand{\resF}{F}                 
\newcommand{\loss}{\gL}               
\newcommand{\lW}{\vtheta}                   
\newcommand{\lGrad}{g}                
\newcommand{\lEta}{\eta}              
\newcommand{\lRho}{\rho}              

\title{CommuteProp: Decoupled Training for Communication Bound Split LLM Fine-Tuning}

\providecommand{\name}[1]{\textbf{#1}}
\providecommand{\email}[1]{\texttt{#1}}
\providecommand{\addr}[1]{#1}

\author{\name{Ding Chen}$^{1}$ \email{ding.chen@my.cityu.edu.hk} \\
        \name{Haochen Luo}$^1$ \email{chester.hc.luo@my.cityu.edu.hk} \\
        \name{Chen Liu}$^1$ \email{chen.liu@cityu.edu.hk} \\
        \addr{$^1$ Department of Computer Science, City University of Hong Kong} \\
        83 Tai Chee Ave, Kowloon Tong, Hong Kong, China
        }

\date{}

\begin{document}

\maketitle

\begin{abstract}
Split learning has emerged as a promising paradigm for privacy-preserving LLM fine-tuning, yet its practical deployment is severely hindered by the sequential communication-computation bottleneck. In conventional synchronous pipelines, clients remain idle while waiting for server-side gradients, resulting in substantial training inefficiency. We propose CommuteProp, an asynchronous split-learning algorithm that decouples the training process into two concurrent phases: a cross-block forward-backward pass and an in-block weight update. By overlapping computation with communication, CommuteProp reduces the marginal cycle time from a sum of all stage latencies to the dominant computational bottleneck. We provide a rigorous asynchronous error and convergence analysis. Moreover, we derive an NS preconditioner method based on our analysis to further mitigate staleness-induced noise. Comprehensive experiments indicate that our algorithm achieves substantial throughput gains while maintaining accuracy comparable to synchronous methods; furthermore, it functions as a plug-and-play module that not only accommodates but actively enhances existing privacy enhancement methods.
\end{abstract}

\section{Introduction}
The increasing demand for fine-tuning Large Language Models (LLMs)\citep{vaswani2017attention} on proprietary datasets has encountered a fundamental conflict between model utility and data privacy. In highly regulated sectors such as healthcare and finance, raw data is highly regulated from leaving local custody due to stringent privacy protocols\cite{voigt2017eu}. While Federated Learning (FL) \citep{kairouz2021advances} offers a decentralized alternative, the colossal memory requirements of LLMs often exceed the hardware limits of local clients, rendering full-model local training infeasible. Consequently, Split Learning (SL) \citep{gupta2018distributed} has emerged as a critical paradigm. By partitioning the model across institutional boundaries, SL allows clients to retain sensitive data locally while offloading the memory-intensive computation to a powerful server. Specifically, let the model $\mathcal{F}$ be partitioned into $N$ layers with parameters $\theta_1,\dots,\theta_N$, let $\mathcal{D}$ be the data distribution with samples $(\lData,\lLable)$, and let $\loss_{\text{task}}$ denote the loss function. We have the following mathematical problem:

\begin{equation}
\label{equ:split_format}
\begin{aligned}
\min_{\lW_1,\ldots,\lW_\layerNum} \quad  &\mathbb{E}_{(\lData, \lLable) \sim \ds} \left[ \loss_{\text{task}} \left( \mathcal{F}(\lData; \lW_1, \ldots, \lW_\layerNum), \; \lLable \right) \right] \\
\Theta_{\text{client}} = \{\lW_i \mid i \in \mathcal{I}_{c}\},  \Theta_{\text{server}} &= \{\lW_j \mid j \in \mathcal{I}_{s}\}, \quad \mathcal{I}_{c} \cup \mathcal{I}_{s} = \{1, \ldots, \layerNum\}, \mathcal{I}_{c} \cap \mathcal{I}_{s} = \emptyset
\end{aligned}
\end{equation}

Parameters on the client side where the private data is hosted are included in $\Theta_{\text{client}}$, while bulk part of parameters include in $\Theta_{\text{server}}$ are on the server side that has no access to the private data.
Although SL offers a promising privacy-preserving paradigm, it is severely bottlenecked by substantial communication overhead. In the conventional synchronous execution, the client remains idle during the server-side computation and the subsequent backpropagation of gradients. 
Specifically, let $t_{\text{client}}$ and $t_{\text{server}}$ denote the local processing times for the client and server, respectively, and $t_c$ represent the one-way communication latency. The marginal cycle time for a single batch under this synchronous paradigm is strictly additive $\Delta T_{\text{sync}} = t_{\text{client}} + t_{\text{server}} + 2t_c$. Unlike the centralized settings that $t_c$ can be ignored due to the high-speed interconnects (e.g. NVLink), in SL, $t_c$ indicates that the training process faces sever communication overhead \textbf{in every iteration}.

To address these inefficiencies, we propose \textbf{CommuteProp}, an asynchronous SL algorithm designed to overlap computation with communication. We decompose the training process into two distinct phases: \textit{cross-block (Phase1)} and \textit{in-block (Phase2)} forward-backward passes. This decoupling allows the client to utilize its previously idle cycles to execute in-block operations for subsequent micro-batches. 
Ideally, the client no longer waits for the server's gradient feedback to initiate the next forward pass. Consequently, the steady-state cycle time is theoretically governed by the dominant computational bottleneck $\Delta T_{async} = \max(t_{\text{client}}, t_{\text{server}}, t_c)$. Figure \ref{fig:overall} demonstrates the superior pipeline efficiency of our proposed algorithm compared to the synchronous baseline in SL. 

\begin{figure}[htbp]
\vspace{-0.2cm}
    \centering
    \includegraphics[width=1.0\linewidth]{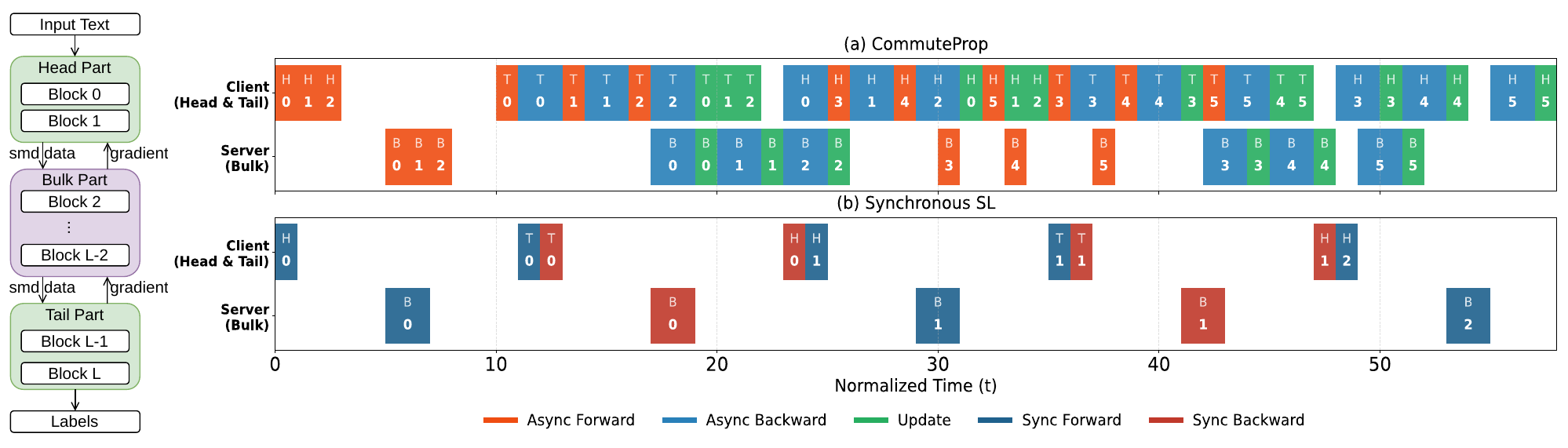}
    \caption{ Illustration of pipeline efficiency. Left: USplit(specific scenario of SL) learning overview. Right: Gantt chart comparison between the proposed algorithm and the synchronous baseline in a USplit learning framework.  H, B, and T denote the head, bulk, and tail parameters, respectively.}
    \label{fig:overall}
\end{figure}

The Gantt chart in Figure \ref{fig:overall} indicates that our asynchronous pipeline fully utilizes client resources, whereas the synchronous baseline suffers from substantial idle time due to sequential waiting. While asynchrony inevitably introduces gradient staleness, we provide a rigorous convergence analysis for CommuteProp. Our analysis reveals that the asynchronous approximation error is bounded by a term proportional to the model’s cross‑block Lipschitz constant — a quantity that prior work \cite{zhang2024transformers} shows to be typically quite small in LLM. Building on this analysis, we further introduce a preconditioner that actively reduces the asynchronous error.

Comprehensive numerical experiments are conducted to evaluate the empirical performance of CommuteProp across various benchmarking tasks. Our results demonstrate that the proposed algorithm significantly improves training throughput while maintaining competitive model accuracy. Moreover, we provide the ablation study, further shows the impact of gradient staleness on convergence stability, providing insights into the trade-off between asynchrony and optimization trajectory. Furthermore, consistent with recent work \cite{liu2025dualguard} that introduces privacy-preserving methods in SL, our experiment indicates that CommuteProp, as a plug-and-play framework, not only accommodates but actively enhances existing privacy enhancement methods.

We summarize the contribution of this paper as follows: (1) We propose CommuteProp, an asynchronous split‑learning algorithm that overlaps computation and communication via a two‑phase pipeline; (2) We propose a rigorous asynchronous error and convergence analysis. We also introduce the NS preconditioner method based on our analysis to reduce the asynchronous error; (3) Comprehensive experiments on NLP benchmarks demonstrate substantial throughput gains, competitive accuracy, and plug‑and‑play compatibility with privacy‑enhancing methods.

\section{Related Works}

\noindent \textbf{SL in LLM Fine-Tuning.} Split Learning (SL) \cite{gupta2018distributed, vepakomma2018split} partitions a model across client and server, allowing the client to offload heavy computation while retaining only a few layers locally. U‑Shaped Split Learning (USL) \cite{gupta2018distributed, lyu2023optimal} further assigns the head and tail to the client, with the trunk delegated to the server. This architecture is well‑suited for autoregressive LLMs. However, recent work \cite{gu2025vflair, chen2024unveiling} shows that SL remains vulnerable to data reconstruction attacks, allowing servers to recover portions of the client’s private data. This has led to defense mechanisms \cite{liu2025dualguard, gu2026differentially} designed to mitigate such risks and strengthen the training process.

\noindent \textbf{Pipeline Parallelism.}
SL is structurally similar to Pipeline Parallelism (PP). In HPC settings, synchronous PP with micro‑batching achieves high throughput and negligible communication overhead \cite{harlap2018pipedream,qi2023zero,liu2024deepseek}. When communication is costly, asynchronous alternatives become preferable, but they suffer from gradient staleness and often require storing historical weights, an impractical storage requirement in SL. Solutions like approximate weight updates \cite{yang2021pipemare} and Nesterov \cite{ajanthan2025nesterov} have been proposed; however, they have only been tested on small models (e.g., nanoGPT \cite{Karpathy2022}). In this work, we adopt synchronous training and the Nesterov‑based(without Weight Stashing)  method \cite{ajanthan2025nesterov} as baselines, and scale the evaluation to Qwen2.5‑1.5B while providing convergence guarantees.

\section{Methodology}
\subsection{Two Phase Optimization Scheme}

Without the loss of generality, we decompose the LLM into $\bNum$ blocks. Each of these blocks comprises several transformer blocks, and they are distributed between the client and server nodes. To preserve the residual architecture, we emphasize that individual transformer blocks cannot be split internally.
Thus, we can formulate the LLM training into following optimization problem:
\begin{equation} \label{eq:e2e_opt}
\begin{aligned}
    \min_{\lW_{1,...,\bNum}} \quad & \loss_{\text{task}}(\lAct_{\bNum}, Y) \quad 
    \text{s.t.} \ & \lAct_{l+1} = \lAct_l + \resF_l(\lAct_l; \lW_l), \quad \forall l \in \{1, \dots, \bNum-1\}
\end{aligned}
\end{equation}
The forward propagation is based on Equation~(\ref{eq:e2e_opt}). For backpropagation, we relax the strict equality constraints into a quadratic penalty method formulation. 
Specifically, we denote $\Theta=\{\lW_l\}_{l=1}^{\bNum - 1}$ and $H=\{\lAct_l\}_{l=1}^{\bNum}$. We make variables in $H$ free parameters and introduce a punishment term with a coefficient $\rho$, so the coupled problem is transformed into a globally relaxed objective:
\begin{equation} \label{eq:penalty_global}
    \min_{\Theta, H} \quad \loss(\Theta,H) =  \loss_{\text{task}}(\lAct_{\bNum}, Y) + \frac{\lRho}{2} \sum_{l=1}^{\bNum-1}  \left\| \lAct_{l+1} - \Big( \lAct_l + \resF_l(\lAct_l; \lW_l) \Big) \right\|^2
\end{equation}

For Equation (\ref{eq:penalty_global}), we employ a two-phase Gaussian-Seidel optimization scheme. 
We consider the iterative optimization in the time stamp $t$, and use the superscript $\cdot^{(t)}$ to indicate the value in the current time stamp. Without the loss of generality, we focus on the block $\lAct_{l+1} = \lAct_l + \resF_l(\lAct_l; \lW_l)$.

\textbf{Phase 1 (Cross Block).} We compute the gradient $\lGrad_{l + 1} = \nabla_{\lAct_{l + 1}} \loss(\Theta, H)$. For intermediate blocks, $\lGrad_{l + 1}$ relies on gradients backpropagated from the subsequent block and is calculated upon receiving them. We then consider a one-step gradient descent with step size $\lEta_{\lAct}$: $\lAct_{l + 1}^{(t+1)} = \lAct_{l + 1}^{(t)} - \lEta_{\lAct} \lGrad_{l + 1}^{(t)}$.


\textbf{Phase 2 (In Block).} We consider optimizing $\lW_l$: $\min_{\lW_l} \left\|\lAct_{l + 1}^{(t + 1)} - \Big(\lAct_l^{(t)} + F_l(\lAct_l^{(t)}; \lW_l)\Big)\right\|^2$. Here, we use $\lAct_{l + 1}^{(t + 1)}$, $\lAct_l^{(t)}$, because $\lW_l$ is updated after $\lAct_{l + 1}$ but before $\lAct_l$. Substituting $\lAct_{l + 1}^{(t+1)} = \lAct_{l + 1}^{(t)} - \lEta_{\lAct} \lGrad_{l + 1}^{(t)}$ in Phase 1 and forward propagation $\lAct_{l+1}^{(t)} = \lAct_l^{(t)} + \resF_l(\lAct_l^{(t)}; \lW_l^{(t)})$, we obtain:
\begin{equation} \label{eq:phase2_sub}
\begin{aligned}
    \lW_{l}^{(t+1)} &= \arg \min_{\lW_l} \left\| \Big( \lAct_{l+1}^{(t)} - \lEta_{\lAct} \lGrad_{l+1}^{(t)} \Big) - \Big( \lAct_l^{(t)} + \resF_l(\lAct_l^{(t)}; \lW_l) \Big) \right\|^2 \\
    & = \arg\min_{\lW_l} \left\| \resF_l(\lAct_l^{(t)}; \lW_l^{(t)}) - \resF_l(\lAct_l^{(t)}; \lW_l) - \lEta_{\lAct} \lGrad_{l+1}^{(t)} \right\|^2 \\
    & = \arg\min_{\lW_l}\frac{1}{2}\left\| \resF_l(\lAct_l^{(t)}; \lW_l^{(t)}) - \resF_l(\lAct_l^{(t)}; \lW_l) \right\|^2 + \lEta_{\lAct} \langle \resF_l(\lAct_l^{(t)}; \lW_l), \lGrad_{l+1}^{(t)} \rangle \\
\end{aligned}
\end{equation}
$\lAct_{l+1}^{(t + 1)}$ and $\lAct_{l+1}^{(t)}$ cancel out in Equation~(\ref{eq:phase2_sub}), so we only need $\lAct_l^{(t)}$, $\lW_l^{(t)}$ and $\lGrad_{l+1}^{(t)}$ to calculate $\lW_{l}^{(t+1)}$. This allows us to simplify Phase 1 to store merely the per‑block activations $\lAct_l^{(t)}$ and gradients $\lGrad_{l+1}^{(t)}$ for the subsequent parameter update without the explicit update of $\lAct_{l+1}^{(t+1)}$.


\subsection{CommuteProp Framework}
\label{sec:two_phase}

Under the two-phase Gauss‑Seidel optimization scheme discussed above, each block in SL must perform three distinct tasks: in Phase 1, (i) receiving outputs from the preceding block, computing and transmitting the activations $\lAct_{l+1}$ to the subsequent block during forward propagation; (ii) receiving gradients from the subsequent block, computing and transmitting the gradients $g_{l+1}$ to the preceding block during backpropagation; and in Phase 2, (iii) performing the in‑block update of parameters $\lW_{l}$.

Motivated by this analysis, we propose \textit{CommuteProp} (illustrated in Figure~\ref{fig:overall}), a framework that processes these tasks asynchronously across different data batches. Because the Phase 2 parameter updates rely on the gradients $g_{l + 1}$ computed during Phase 1, we establish two separate queues, $\gQ_1$ and $\gQ_2$, for each block. Phase 1 tasks are managed in $\gQ_1$, while the in-block parameter update tasks (Phase 2) are put in $\gQ_2$. 
These queues store only the essential contextual information required for each task, such as $\lAct_l$ and $g_{l + 1}$. To strictly control memory consumption, we cap the maximum length of each queue at $\tau_{\max}$. Furthermore, we exclude the parameter weights $\lW_l$ from the queued task information, as they typically possess a significantly larger memory footprint than either the activations or their gradients.

The exclusion of parameter weights $\lW_l$ from the queued task information introduces challenges when applying Equation~(\ref{eq:phase2_sub}) to update $\lW_l$. While we can use the delayed gradients $g_{l + 1}^{(t - \tau)}$ and activations $\lAct_l^{(t - \tau)}$ retrieved from the queue ($\tau \leq \tau_{\max}$), the absence of the corresponding historical parameters $\lW_l^{(t - \tau)}$ forces the update mechanism to rely on the current parameters $\lW_l^{(t)}$ instead.

For notational simplicity, we define the local loss as $\loss_{local}(\lW_l) = \langle \resF_l(\lAct_l^{(t-\tau)}; \lW_l), \lGrad_{l+1}^{(t-\tau)} \rangle$ and assume $F_l(\lAct_l^{(t - \tau)}; \cdot)$ is $L_0$-Lipschitz continuous: $\| F_l(\lAct_l^{(t - \tau)}; \lW_l) - F_l(\lAct_l^{(t - \tau)}; \lW_l^{(t)}) \| \le L_0 \| \lW_l - \lW_l^{(t)} \|$. We can then reformulate the update rule from Equation~(\ref{eq:phase2_sub}) as a proximal optimization problem: 

$$\lW_{l}^{(t+1)} = \arg\min_{\lW_l} \left\{ \lEta_{\lAct} \loss_{local}(\lW_l) + \frac{L_0^2}{2} \left\| \lW_l -\lW_l^{(t)} \right\|^2 \right\}$$
Furthermore, we linearize $\loss_{local}(\lW_l)$ at the current parameter $\lW_l^{(t)}$, giving the approximation:
\begin{equation}
\label{eq:update}
\begin{aligned}
\lW_{l}^{(t+1)} & \approx \arg\min_{\lW_l} \left\{ \lEta_{\lAct}  \left\langle \nabla_{\lW_l} \loss_{local}(\lW_l^{(t)}), \lW_l-\lW_l^{(t)} \right\rangle + \frac{L_0^2}{2} \left\| \lW_l -\lW_l^{(t)} \right\|^2 \right\} = \lW_l^{(t)} - \lEta \widetilde{\lGrad}_l^{(t-\tau)}
\end{aligned}
\end{equation}
where $\widetilde{\lGrad}_l^{(t-\tau)} = \nabla_{\lW_l} \loss_{local}(\lW_l^{(t)}) = \left[\nabla_{\lW_l} \resF_l(\lAct_l^{(t-\tau)}; \lW_l^{(t)})\right] \lGrad_{l+1}^{(t-\tau)}$ denotes the surrogate gradient, and $\eta = \lEta_{\lAct} / L_0^2$ is the effective learning rate which can be tuned in practice.

Notably, we use the current parameter $\lW^{(t)}$ instead of $\lW^{(t - \tau)}$ which is unsaved. A rigorous convergence analysis of this substitution is provided in Section \ref{sec:convergence}. Moreover, although the derivation assumes a first‑order update, it naturally extends to popular optimizers such as AdamW\cite{loshchilov2017decoupled} and Muon\cite{jordan2024muon} for LLM fine‑tuning. We formalize our two‑phase pipeline, CommuteProp, in Algorithm~\ref{alg:commuteprop}.

\begin{figure}[htbp]
    \centering 
    \begin{minipage}[t]{0.25\textwidth}
        \vspace{0cm} 
        \centering
        \includegraphics[height=0.25\textheight]{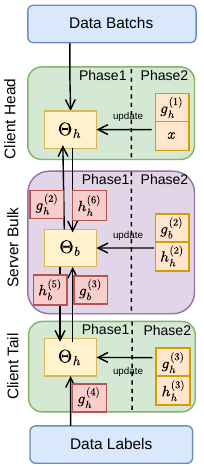}
        \captionsetup{font=scriptsize, justification=centering}
        \caption{Snapshot of Algorithm \ref{alg:commuteprop} on Usplit }
        \label{fig:alg}
    \end{minipage}%
    \hfill 
    \begin{minipage}[t]{0.7\textwidth}
        \vspace{-17pt}
        \begin{algorithm}[H]
            \footnotesize          
            \setlength{\itemsep}{0pt}   
            \setlength{\parsep}{0pt}
            \setlength{\topsep}{0pt}
            \setlength{\partopsep}{0pt}
            \captionsetup{font=footnotesize}
            \caption{CommuteProp Node: Event Ingestion }
            \label{alg:commuteprop}
            \begin{algorithmic}[1]
            \REQUIRE Phase 1 Task Queue $\mathcal{Q}_1$, Phase 2 Task Queue $\mathcal{Q}_2$, Accumulation bound $\tau$
            
            \WHILE{Training is active}
            
                \IF{\textbf{Received} Backward target $\lGrad_{l+1}^{(t)}$ from network}
                    \STATE $\mathcal{Q}_1.\text{Push}(\text{Task\_Backward}(\lGrad_{l}^{(t)}))$ \COMMENT{Backward pass $\Theta$ detached}
                
                \ELSIF{\textbf{Received} Forward request $\lAct_l^{(t)}$ from network}
                    \STATE $\mathcal{Q}_1.\text{Push}(\text{Task\_Forward} (\lAct_{l+1}^{(t)}))$ \COMMENT{Forward pass with $\Theta$ detached}
            
                \ENDIF
                
                \IF{Receive both $\lAct_{l+1}^{(t)}$ and $\lGrad_{l+1}^{(t)}$}
                    \STATE $\mathcal{Q}_2.\text{Push}(\text{Task\_Phase2\_Update}(\lAct_{l+1}^{(t)},\lGrad_{l}^{(t)})$ \COMMENT{Update $\lW_l$ based on (\ref{eq:update})}
                \ENDIF
                \STATE Task $\leftarrow$ \textbf{Dispatcher\_Schedule}($\mathcal{Q}_1,\mathcal{Q}_2$) 
                \COMMENT{Dispatcher Algorithm \ref{alg:dispatcher}}
                \IF{Task is not NULL}
                    \STATE \textbf{Execute}(Task) \COMMENT{Run the actual computation}
                \ENDIF
            \ENDWHILE
            \end{algorithmic}
        \end{algorithm}
    \end{minipage}
    \vspace{-0.15cm}
\end{figure}

Algorithm~\ref{alg:commuteprop} outlines the per-node execution logic, while Figure~\ref{fig:alg} illustrates a snapshot of the overall Usplit workflow at a specific time step. In this figure, the red squares denote the outputs of cross-block tasks during Phase~1. Conversely, the orange squares represent the inputs derived from these Phase~1 outputs, which subsequently trigger the Phase~2 tasks. Notably, the information communicated between blocks consists strictly of intermediate results ($\lAct_l$ and $\lGrad_{l+1}$) rather than actual model parameters. Because of this, CommuteProp is highly compatible with existing privacy-preserving techniques~\cite{liu2025dualguard}. This compatibility allows the intermediate data to be safely offloaded to a memory-abundant server, effectively shifting the heavy memory burden away from the client. Finally, given the asynchronous nature of this execution model, efficiently scheduling these distinct tasks to maximize overall pipeline throughput presents a critical challenge, which we address in the subsequent section.

\subsection{Dispatcher Schedule and Time Efficiency}

In the CommuteProp framework (Algorithm~\ref{alg:commuteprop}), client and server nodes schedule forward activations, backward activation gradients (Phase~1), and weight updates (Phase~2). 
The scheduling strategy is crucial for computational efficiency, but finding the optimal scheduling strategy is a computationally prohibitive mixed‑integer programming (MIP) problem. 
In this context, based on the Critical Path Method (CPM)~\citep{kelley1959critical}, we introduce a lightweight CPM Heuristic Dispatcher as shown in Algorithm~\ref{alg:dispatcher}. Specifically, Phase~1, which are stored in $\gQ_1$, naturally form the critical path because they produce the activations and gradients required for communication. Any delay in them directly stalls downstream stages. Phase~2 weight updates stored in $\gQ_2$, by contrast, are strictly local and lie off the critical path. Thus, in Algorithm \ref{alg:dispatcher}, we assign top priority to Phase 1. To avoid excessive staleness, we also set a maximum latency $\tau_{\max}$ for Phase 2.

\begin{figure}[htbp]
  \vspace{-0.2cm}
  \centering
  \begin{minipage}[t]{0.65\textwidth}
    \vspace{0pt} 
    \centering
    
    \begin{subfigure}[b]{0.49\linewidth}
    \vspace{0.1cm}
      \centering
      \includegraphics[width=\linewidth]{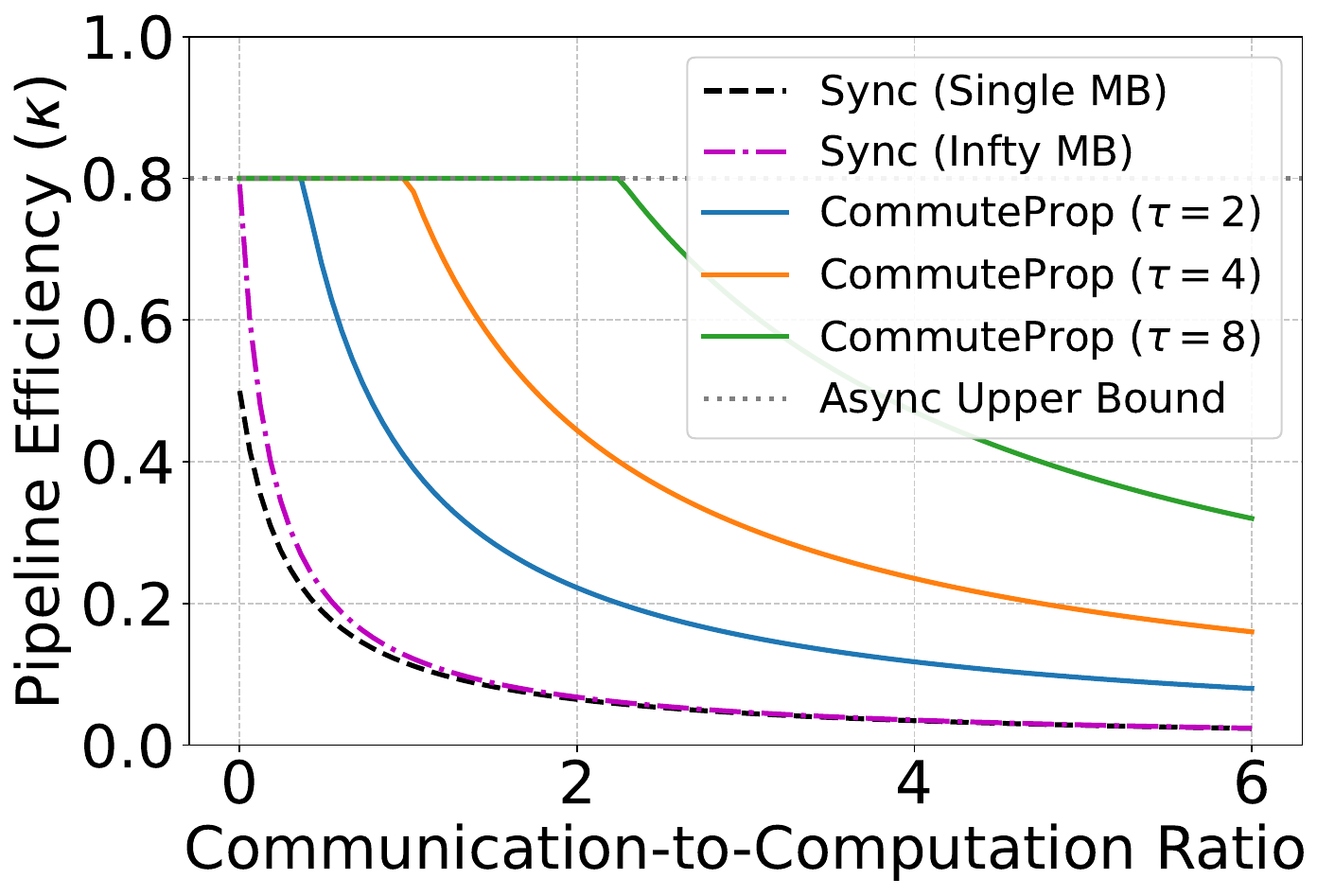} 
      \caption{}
      \label{fig:eff_comm}
    \end{subfigure}
    \hfill
    \begin{subfigure}[b]{0.49\linewidth}
      \centering
      \includegraphics[width=\linewidth]{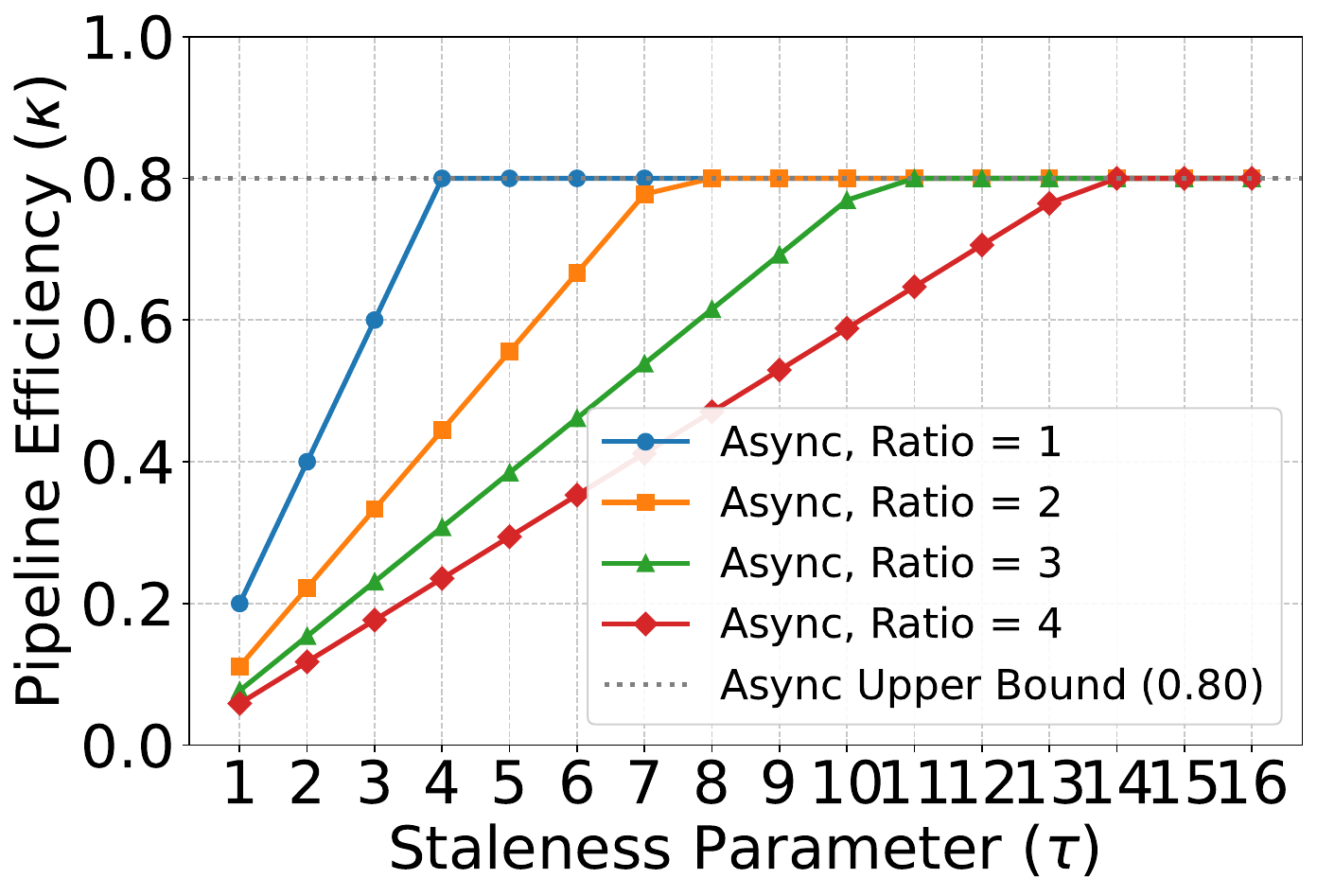}
      \caption{}
      \label{fig:eff_tau}
    \end{subfigure}
    \captionsetup{font=footnotesize}
    \caption{Pipeline efficiency comparison across different system parameters. Figure \ref{fig:eff_comm} demonstrate the relationship between pipeline efficiency and communication computation ratio. Figure \ref{fig:eff_tau} demonstrate the relationship between $\tau$ and pipeline efficiency } 
    \label{fig:efficiency_plots_horizontal}
  \end{minipage}%
  \hfill
  \begin{minipage}[t]{0.33\textwidth}
    \vspace{-1em} 
    \begin{algorithm}[H] 
        
        \footnotesize          
        \setlength{\itemsep}{0pt}   
        \setlength{\parsep}{0pt}
        \captionsetup{font=footnotesize}
        \caption{CPM Heuristic Dispatcher}
        \label{alg:dispatcher}
        \begin{algorithmic}[1]
        \REQUIRE Phase 1 Queue $\mathcal{Q}_1$, Phase 2 Queue $\mathcal{Q}_2$
        \WHILE{Training is active}
            \IF{$|\mathcal{Q}_2| \ge \tau_{\max}$}
                \STATE $\mathcal{Q}_2.\text{Pop()}$
            \ELSIF{$|\mathcal{Q}_1| > 0$}
                \STATE $\mathcal{Q}_1.\text{Pop()}$
            \ELSIF{$|\mathcal{Q}_2| > 0$}
                \STATE $\mathcal{Q}_2.\text{Pop()}$
            \ENDIF
        \ENDWHILE
        \end{algorithmic}
    \end{algorithm}
  \end{minipage}
  \vspace{-0.2cm}
\end{figure}

In CommuteProp, $t_{l,1}$ and $t_{l,2}$ represent the respective computation durations of Phase 1 and Phase 2 for node $l$, while $t_{c}$ denotes the one-way communication time. For classical pipeline synchronous algorithm with micro-batch $m$,  denote the total computation time for node $l$ as $t_{l,s}$. Denote $\kappa_a$ and $\kappa_s$ as  pipeline efficiency for asynchronous CommuteProp and synchronous algorithms, we have:
\vspace{-0.5em}
\begin{equation}
\label{equ:efficiency}
\resizebox{\linewidth}{!}{$
    \kappa_{a} = \frac{\sum_{l=1}^L (t_{l,1} + t_{l,2})}{ L \max\left\{\max\limits_{l} \{t_{l,1} + t_{l,2}\}, \frac{2t_c+\sum_{l=1}^L t_{l,1} }{ \tau_{\max}} \right\} }\; \kappa_{s}(M) = \frac{M \sum_{l=1}^{L} t_{l,s}}{L \left[ \sum_{l=1}^{L} t_{l,s} + 2t_c + (M-1)\left(2t_c + \max_{l} t_{l,s}\right) \right]}$}
\vspace{-0.5em}
\end{equation}

In CommuteProp, the denominator’s bottleneck term comes either from the heaviest node $\max\limits_{l} \{t_{l,1} + t_{l,2}\}$ or from the overall communication time $\left(2t_c+\sum_{l=1}^L t_{l,1} \right)/{ \tau_{\max}}$. The synchronous denominator, in contrast, incurs a per‑node serialisation of $2t_c+\max_l t_{l,s}$ and, additionally, a pipeline‑drain cost of $\sum_l t_{l,s}$, which CommuteProp avoids entirely.

Specialising to the Usplit setting ($L=2$), we measure the communication impact via the ratios $r_{\text{async}} = 2t_c / (t_{1,1} + t_{1,2} + t_{2,1} + t_{2,2})$, 
while for the synchronous counterpart it is $r_{\text{sync}} = 2t_c / (t_{1,s} + t_{2,s})$. Because the split design introduces an extra forward computation, the synchronous pipeline’s computational load is heavier; In Figure Figure~\ref{fig:eff_comm}, we assume in $r_{\text{sync}} \approx 0.75 \, r_{\text{async}}$ and server is 1.67 faster than client.  Figure~\ref{fig:eff_comm} shows that CommuteProp yields a significant improvement in pipeline efficiency, 
while Figure~\ref{fig:eff_tau} reveals that systems with a higher communication-to-computation ratio 
require a correspondingly larger $ \tau_{\max}$ to achieve optimal pipeline efficiency. In the next section, we theoretically analyze the relationship between the latency threshold $ \tau_{\max}$ and the resulting utility.

\section{Convergence Analysis}
\label{sec:convergence}
\subsection{Preliminary}
\label{sec:preliminary}
We begin by introducing the following assumptions:

\begin{assumption}[Hierarchical Lipschitz Smoothness]
\label{assum:cross_hessian}
We assume the loss function $\loss$ is twice-differentiable and globally $L$-Lipschitz smooth, such that its Hessian is bounded by $L$, i.e., $\| \nabla^2 \loss(\lW) \| \le L$. Furthermore, we assume a hierarchical smoothness structure: for any individual block $l$, the block-wise Hessian norm is bounded by $L_l$, and for any pair of distinct blocks $l$ and $j$ ($l \neq j$), the cross-block Hessian norm is bounded by $H_{l,j} \ge 0$. Formally, we have:
\begin{equation}
    \left\| \nabla^2_{\lW_l \lW_l} \loss(\lW) \right\| \le L_l, \quad \text{and} \quad \left\| \nabla^2_{\lW_l \lW_j} \loss(\lW) \right\| \le H_{l,j}
\end{equation}
\end{assumption}

\begin{assumption}[Bounded Gradients]
\label{assum:bounded_gradient}
We assume that the objective function has bounded gradients. That is, there exists a constant $G > 0$ such that for any parameter state $\lW$, the global gradient norm is bounded by $\| \nabla_{\lW} \loss(\lW) \| \le G.$
\end{assumption}

Assumption \ref{assum:cross_hessian} characterizes the hierarchical smoothness properties of the objective function. Crucially, recent empirical and theoretical studies \cite{zhang2024transformers,dong2025towards} on Transformer architectures  demonstrate that the cross-block Hessian norms $H_{l,j}$ are typically orders of magnitude smaller than the block-wise Hessian norms $L_l$. We empirically verify this property on the evaluated split architectures in Section~\ref{sec:hessian}.

Furthermore, under Assumption \ref{assum:bounded_gradient}, if we employ a standard first-order optimization method (such as Stochastic Gradient Descent) with a learning rate $\lEta$, the parameter update rule $\lW^{(t)} = \lW^{(t-1)} - \lEta \nabla \loss(\lW^{(t-1)})$ naturally implies that the step size for any block $j \in \{1, \dots, \bNum\}$ is strictly bounded per iteration: $\| \lW_j^{(t)} - \lW_j^{(t-1)} \| \le \lEta G$. Further convergence analysis relies on this bounded update property in our subsequent convergence analysis.

\subsection{Asynchronous Error Analysis}
Although CommuteProp updates blocks asynchronously, we analyze its convergence 
by a virtual serialized sequence in which blocks $1,\dots,L$ are updated one by one, 
each using the same delayed gradient as in the real execution. 
Let $\Theta^{(t)}$ be the parameters after $t$ steps in this virtual sequence 
($\Theta^{(0)}$ is the initial point). Then
\begin{equation}
\vspace{-0.5em}
    \loss(\Theta^{(L)}) - \loss(\Theta^{(0)}) = \sum_{t=0}^{L-1} \bigl( \loss(\Theta^{(t+1)}) - \loss(\Theta^{(t)}) \bigr)
\end{equation}
In step $t$, only one block is updated; the other blocks remain fixed. 
Writing the loss as a function of $\lW_l$ alone (still denoted $\loss(\lW_l)$) and using $L_l$-smoothness yields
\begin{equation}
    \loss(\lW_{l}^{(t+1)}) \le \loss(\lW_{l}^{(t)}) + \langle \nabla_{\lW_l}\loss(\lW_{l}^{(t)}),\lW^{(t+1)}_l-\lW^{(t)}_l\rangle + \frac{L_l}{2}\|\lW^{(t+1)}_l-\lW^{(t)}_l\|^2 
\end{equation}
Substituting the update \eqref{eq:update} gives
\begin{equation} \label{eq:descent}
\vspace{-0.5em}
     \loss(\lW_{l}^{(t+1)}) \le \loss(\lW_{l}^{(t)}) - \frac{\eta}{2}\|\nabla_{\lW}\loss(\lW_l^{(t)})\|^2 - \left( \frac{\eta}{2} - \frac{L_l\eta^2}{2} \right)\|\widetilde{\lGrad}^{(t- \tau_{\max})}\|^2 + \frac{\eta}{2}\|\nabla_{\lW}\loss(\lW_l^{(t)})-\widetilde{\lGrad_l}^{(t- \tau_{\max})}\|^2 
\end{equation}

Equation (\ref{eq:descent}) indicates that the per-step descent is bottlenecked by asynchronous error $\| \nabla_{\lW}\loss(\lW_l^{(t)})-\widetilde{\lGrad}^{(t-  \tau)}\|^2$. Thus, we have the following Lemma:

\begin{lemma}[Asynchronous Error Bound]
\label{lemma:async_error}
Let $\lW^{(t)} = [\lW_1^{(t)}, \dots, \lW_\bNum^{(t)}]$ be the state of the model parameters at step $t$. For a target layer $l$ under a staleness delay of $  \tau\le\tau_{\max}$ steps, the asynchronous proxy gradient $\widetilde{\lGrad}_l^{(t- \tau)}$ satisfies the following error bound:
\begin{equation}
    \left\| \nabla_{\lW_l}\loss(\lW^{(t)}) - \widetilde{\lGrad}_l^{(t-  \tau)} \right\|^2 \le \left( \lEta   \tau_{\max} G \sum_{j \neq l} H_{l,j} \right)^2
\end{equation}
\vspace{-0.5em}
\end{lemma}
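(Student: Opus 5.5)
The plan is to write the surrogate gradient $\widetilde{\lGrad}_l^{(t-\tau)}$ as the true block gradient $\nabla_{\lW_l}\loss$ evaluated at a hybrid parameter state. Then the mean value theorem applied to $\nabla_{\lW_l}\loss$, together with Assumption~\ref{assum:cross_hessian}, controls the discrepancy. Concretely, define the hybrid state
\[
\bar{\lW}^{(t,\tau)} = \bigl[\lW_1^{(t-\tau)},\dots,\lW_{l-1}^{(t-\tau)},\,\lW_l^{(t)},\,\lW_{l+1}^{(t-\tau)},\dots,\lW_\bNum^{(t-\tau)}\bigr].
\]
Block $l$ is current and all other blocks are stale. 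By construction, $\lAct_l^{(t-\tau)}$ depends only on upstream stale blocks, and $\lGrad_{l+1}^{(t-\tau)}$ is the backpropagated signal produced with stale downstream blocks. The chain rule then identifies $\widetilde{\lGrad}_l^{(t-\tau)} = [\nabla_{\lW_l}\resF_l(\lAct_l^{(t-\tau)};\lW_l^{(t)})]\lGrad_{l+1}^{(t-\tau)}$ with $\nabla_{\lW_l}\loss(\bar{\lW}^{(t,\tau)})$. I will state this as the modeling convention implicit in the virtual serialized sequence: the stored activation and gradient are those of the stale network, and only the current $\lW_l^{(t)}$ is used in block $l$.

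Next I would compare $\nabla_{\lW_l}\loss(\lW^{(t)})$ with $\nabla_{\lW_l}\loss(\bar{\lW}^{(t,\tau)})$. The two states agree in block $l$ and differ only in blocks $j\neq l$. Along the segment $\lW(s)=\bar{\lW}^{(t,\tau)}+s(\lW^{(t)}-\bar{\lW}^{(t,\tau)})$, the integral form of the mean value theorem gives
\[
\nabla_{\lW_l}\loss(\lW^{(t)})-\nabla_{\lW_l}\loss(\bar{\lW}^{(t,\tau)})=\int_0^1\sum_{j\neq l}\nabla^2_{\lW_l\lW_j}\loss(\lW(s))\,\bigl(\lW_j^{(t)}-\lW_j^{(t-\tau)}\bigr)\,ds .
\]
The diagonal term $\nabla^2_{\lW_l\lW_l}$ drops out because the block-$l$ displacement is zero. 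This cancellation is the reason the bound involves only the cross-block constants $H_{l,j}$. Taking norms and using $\|\nabla^2_{\lW_l\lW_j}\loss\|\le H_{l,j}$ bounds the difference by $\sum_{j\neq l}H_{l,j}\|\lW_j^{(t)}-\lW_j^{(t-\tau)}\|$.

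Then I would bound the block displacements by telescoping over the $\tau$ intervening steps. The bounded-update property from Section~\ref{sec:preliminary}, namely $\|\lW_j^{(s)}-\lW_j^{(s-1)}\|\le\eta G$ under Assumption~\ref{assum:bounded_gradient}, gives $\|\lW_j^{(t)}-\lW_j^{(t-\tau)}\|\le\tau\eta G\le\tau_{\max}\eta G$. Substituting yields $\|\nabla_{\lW_l}\loss(\lW^{(t)})-\widetilde{\lGrad}_l^{(t-\tau)}\|\le\eta\tau_{\max}G\sum_{j\neq l}H_{l,j}$, and squaring gives the statement.

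The main obstacle is the first step: justifying rigorously that the surrogate equals the exact block gradient at a single consistent hybrid state. The stored $\lGrad_{l+1}^{(t-\tau)}$ may be assembled from downstream blocks at slightly different staleness levels. In that case there is no single $\bar{\lW}$, and one must either assume a common delay per micro-batch or argue blockwise with per-block delays $\tau_j\le\tau_{\max}$. I would take the second route: each downstream block $j$ enters at its own stale iterate, the mean-value argument goes through verbatim with displacements $\lW_j^{(t)}-\lW_j^{(t-\tau_j)}$, and each displacement is still bounded by $\tau_{\max}\eta G$. A secondary subtlety is that the bounded step $\eta G$ must hold for the actual surrogate updates, not just for exact gradient steps. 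I would handle it by interpreting $G$ as also bounding the surrogate gradients, which is consistent with how the paper invokes the bounded-update property.
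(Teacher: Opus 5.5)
\paragraph{Comparison with the paper's proof.} Your argument reproduces the paper's proof almost verbatim: the mixed-state identification, the mean value theorem with the block-$l$ term vanishing, and the telescoping drift bound. The last two steps are sound. They correctly bound $\|\nabla_{\lW_l}\loss(\lW^{(t)})-\nabla_{\lW_l}\loss(\bar{\lW}^{(t,\tau)})\|$, given the bounded-update property.

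\paragraph{The gap: the first step.} The paper's proof shares this gap, since it asserts the same ``mathematically equivalent'' mixed state. The claim fails: $\widetilde{\lGrad}_l^{(t-\tau)}$ is \emph{not} $\nabla_{\lW_l}\loss(\bar{\lW}^{(t,\tau)})$.
\begin{itemize}
\item The stored signal $\lGrad_{l+1}^{(t-\tau)}$ was backpropagated from $\lAct_{l+1}^{(t-\tau)}=\lAct_l^{(t-\tau)}+\resF_l(\lAct_l^{(t-\tau)};\lW_l^{(t-\tau)})$.
\item So it was produced with block $l$'s \emph{own} stale weights, precisely the unstashed $\lW_l^{(t-\tau)}$ the method cannot recover, not only with stale downstream blocks.
\item Only the Jacobian of $\resF_l$ is refreshed.
\end{itemize}
Write $\bar{\lAct}_{l+1}=\lAct_l^{(t-\tau)}+\resF_l(\lAct_l^{(t-\tau)};\lW_l^{(t)})$ and keep the downstream blocks stale in both terms. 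Then
\[
\nabla_{\lW_l}\loss(\bar{\lW}^{(t,\tau)})-\widetilde{\lGrad}_l^{(t-\tau)}=\bigl[\nabla_{\lW_l}\resF_l(\lAct_l^{(t-\tau)};\lW_l^{(t)})\bigr]\bigl(\nabla_{\lAct_{l+1}}\loss\big|_{\bar{\lAct}_{l+1}}-\nabla_{\lAct_{l+1}}\loss\big|_{\lAct_{l+1}^{(t-\tau)}}\bigr).
\]
This is a self-interaction of block $l$ through the downstream map, namely the Gauss--Newton part of $\nabla^2_{\lW_l\lW_l}\loss$. So the cancellation ``block-$l$ displacement is zero'', on which your argument hinges, does not occur. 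Your per-block-delay repair addresses only inconsistent downstream staleness and does not touch this term.

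\paragraph{Counterexample.} The stated bound actually fails under the paper's assumptions. Take two residual blocks on $\mathbb{R}^2$ with $\lAct_1=0$, $\resF_1(\lAct;\lW_1)=(\lW_1,0)$, $\resF_2(\lAct;\lW_2)=(0,\lW_2)$, and $\loss_{\text{task}}(z)=\psi(z_1)+\psi(z_2)$ with $\psi(s)=\sqrt{1+s^2}$.
\begin{itemize}
\item Then $\loss=\psi(\lW_1)+\psi(\lW_2)$ has Hessian norm at most $1$, gradient norm at most $\sqrt{2}$, and $H_{1,2}=0$. Hence the right-hand side of the lemma is $0$.
\item Since $\resF_2$ does not depend on $\lAct$, the first coordinate of $\lGrad_2^{(t-\tau)}=\nabla\loss_{\text{task}}(\lAct_3^{(t-\tau)})$ is $\psi'(\lW_1^{(t-\tau)})$.
\item Therefore $\widetilde{\lGrad}_1^{(t-\tau)}=\psi'(\lW_1^{(t-\tau)})$, while $\nabla_{\lW_1}\loss(\lW^{(t)})=\psi'(\lW_1^{(t)})$.
\item Because $\psi'$ is strictly increasing, these differ whenever block $1$ has moved during the intervening Phase~2 updates, which it does for $\tau>0$.
\end{itemize}

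\paragraph{What a correct statement needs.} It must add the displayed term. Bounding it requires an extra assumption, for example a bound on the curvature of the downstream loss in $\lAct_{l+1}$. That yields a contribution of order $L_0^2\,\|\nabla^2_{\lAct_{l+1}}\loss\|\,\lEta\tau G$, which is controlled by within-block rather than cross-block curvature. Otherwise the lemma holds only if the surrogate is \emph{defined} to be $\nabla_{\lW_l}\loss(\bar{\lW}^{(t,\tau)})$, and that is not the update CommuteProp performs.
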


Lemma \ref{lemma:async_error} reveals that the asynchronous error inherently scales with learning rate, the staleness delay $  \tau$ and the cross-block Hessian norms $H_{l,j}$. As highlighted in Section \ref{sec:preliminary}, the cross-block curvature $H_{l,j}$ is exceptionally small for $l \neq j$. Consequently, the summation term $\sum_{j \neq l} H_{l,j}$ remains relatively small. Building upon this lemma, we establish the overall convergence guarantee:

\begin{theorem} \label{thm:convergence}
Let $\loss^*$ be the global lower bound of the task loss function. Under Assumptions \ref{assum:cross_hessian} and \ref{assum:bounded_gradient}, if the learning rate satisfies $\lEta \le \frac{1}{L_l}$ (where $L_l$ is the Lipschitz smoothness constant of block $l$), the CommuteProp algorithm satisfies the following ergodic convergence bound after $T$ iterations:
\begin{equation}
\frac{1}{T} \sum_{t=0}^{T-1} \| \nabla_{\lW_l} \loss(\lW^{(t)}) \|^2 \le \frac{2[\loss(\lW_l^{(0)}) - \loss^*]}{\lEta T} + \lEta^2   \tau_{\max}^2 G^2 \left( \sum_{j \neq l} H_{l,j} \right)^2
\vspace{-0.5em}
\end{equation}
\end{theorem}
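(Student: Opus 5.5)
The plan is a standard telescoping argument built on the per-step descent inequality \eqref{eq:descent}, with Lemma~\ref{lemma:async_error} controlling the error term. The proof has four steps.

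\emph{Step 1: drop the surrogate-gradient term.} Since $\lEta \le 1/L_l$, the coefficient $\frac{\eta}{2} - \frac{L_l\eta^2}{2} = \frac{\eta}{2}(1 - L_l\eta)$ is nonnegative. The term $-\left(\frac{\eta}{2} - \frac{L_l\eta^2}{2}\right)\|\widetilde{\lGrad}_l\|^2$ in \eqref{eq:descent} is therefore nonpositive and can be discarded. This leaves
\begin{equation*}
\loss(\lW^{(t+1)}) \le \loss(\lW^{(t)}) - \frac{\lEta}{2}\|\nabla_{\lW_l}\loss(\lW^{(t)})\|^2 + \frac{\lEta}{2}\bigl\|\nabla_{\lW_l}\loss(\lW^{(t)}) - \widetilde{\lGrad}_l^{(t-\tau)}\bigr\|^2 .
\end{equation*}
The identity behind \eqref{eq:descent} is the polarization step. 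Substituting $\lW_l^{(t+1)} - \lW_l^{(t)} = -\lEta\widetilde{\lGrad}_l$ and using $-\langle a,b\rangle = \tfrac12(\|a-b\|^2 - \|a\|^2 - \|b\|^2)$ gives exactly these three terms. I take \eqref{eq:descent} as given.

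\emph{Step 2: apply the lemma.} Insert the bound from Lemma~\ref{lemma:async_error}, $\|\nabla_{\lW_l}\loss(\lW^{(t)}) - \widetilde{\lGrad}_l^{(t-\tau)}\|^2 \le (\lEta\tau_{\max}G\sum_{j\neq l}H_{l,j})^2$. It holds uniformly in $t$ because $\tau \le \tau_{\max}$ always, which the dispatcher (Algorithm~\ref{alg:dispatcher}) enforces. Rearranging gives
\begin{equation*}
\frac{\lEta}{2}\|\nabla_{\lW_l}\loss(\lW^{(t)})\|^2 \le \loss(\lW^{(t)}) - \loss(\lW^{(t+1)}) + \frac{\lEta}{2}\Bigl(\lEta\tau_{\max}G\sum_{j\neq l}H_{l,j}\Bigr)^2 .
\end{equation*}

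\emph{Step 3: telescope and normalize.} Sum this inequality over $t=0,\dots,T-1$. The loss differences telescope to $\loss(\lW^{(0)}) - \loss(\lW^{(T)})$, which is at most $\loss(\lW^{(0)}) - \loss^*$. Multiplying by $2/(\lEta T)$ yields
\begin{equation*}
\frac{1}{T}\sum_{t=0}^{T-1}\|\nabla_{\lW_l}\loss(\lW^{(t)})\|^2 \le \frac{2[\loss(\lW^{(0)}) - \loss^*]}{\lEta T} + \lEta^2\tau_{\max}^2G^2\Bigl(\sum_{j\neq l}H_{l,j}\Bigr)^2 ,
\end{equation*}
which is the claimed bound.

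\emph{Step 4: reconcile with the virtual serialized sequence.} This is the main obstacle. Inequality \eqref{eq:descent} is stated with the loss viewed as a function of $\lW_l$ alone, while the virtual sequence updates blocks $1,\dots,L$ in turn. For the telescoping in Step 3 to be valid, the index $t$ must run over iterations in which block $l$ is the one updated. The steps where other blocks $j\neq l$ move must then not increase the loss in an uncontrolled way.

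I would handle this by applying Steps 1--2 to every block, each with its own $L_j$, and telescoping over the full serialized sequence. Each block's descent inequality is valid on its own step, and every gradient-norm term on the left is nonnegative. Dropping the terms for $j\ne l$ therefore isolates block $l$, with the error term bounded by the worst-case bound from Lemma~\ref{lemma:async_error}. The telescoped sum then runs from the initial loss $\loss(\lW^{(0)})$ down to $\loss^*$; this is what the statement denotes $\loss(\lW_l^{(0)})$.

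A cost of this approach is that the error terms of the other blocks also enter the sum. I would either absorb them by using the same constant for every block, or interpret $T$ as counting full sweeps. In the latter case the learning-rate condition is needed for all blocks, i.e.\ $\lEta \le 1/\max_j L_j$, and I would state that this is the reading under which the bound holds as written.
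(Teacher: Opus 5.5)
Your Steps 1--3 are exactly the paper's proof. The paper drops the $\|\widetilde{\lGrad}_l^{(t-\tau)}\|^2$ term via $\lEta\le 1/L_l$, inserts Lemma~\ref{lemma:async_error}, rearranges, multiplies by $2/\lEta$, telescopes over $t=0,\dots,T-1$, and uses $\loss(\lW_l^{(T)})\ge\loss^*$. The paper has no counterpart to your Step 4. It telescopes $\loss(\lW_l^{(t)})$ directly, silently treating consecutive iterates as differing only in block $l$. You are right that this is the weak point. If block $l$ really were the only block moving, the drift $\|\lW_j^{(t)}-\lW_j^{(t-\tau)}\|$ in the lemma's proof would vanish, and so would the noise floor. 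So the lemma and the telescoping rest on incompatible readings.

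Your repair, however, does not yield the bound as written, so your closing sentence overclaims. Let $\lW^{(s,j)}$ be the virtual state in sweep $s$ after blocks $1,\dots,j$ have moved, so $\lW^{(s,0)}=\lW^{(s)}$ and $\lW^{(s,L)}=\lW^{(s+1)}$. Let $E_j=\lEta\tau_{\max}G\sum_{i\neq j}H_{j,i}$. Summing the per-block inequalities over a sweep and discarding the gradient terms for $j\neq l$ gives
\[
\frac{\lEta}{2}\bigl\|\nabla_{\lW_l}\loss(\lW^{(s,l-1)})\bigr\|^2\le\loss(\lW^{(s)})-\loss(\lW^{(s+1)})+\frac{\lEta}{2}\sum_{j=1}^{L}E_j^2 .
\]
After telescoping, the floor is therefore $\sum_{j}E_j^2$, not $E_l^2$. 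Each block $j\neq l$ may raise the loss by up to $\frac{\lEta}{2}E_j^2$ on its own step, and nothing offsets this once its gradient term is dropped. Using a common constant only turns this into $L\max_j E_j^2$, and counting full sweeps does not remove these terms either.

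Moreover, the left side is evaluated at $\lW^{(s,l-1)}$ rather than $\lW^{(s)}$, which needs one more cross-Hessian correction to match the statement. So you have two honest options. You can stop at Steps 1--3 under the paper's implicit single-block reading, with the inconsistency noted above. Or you can state the weaker multi-block result, with floor $\sum_j E_j^2$ and $\lEta\le 1/\max_j L_j$. The bound with only $\bigl(\sum_{j\neq l}H_{l,j}\bigr)^2$ does not follow from your Step 4.
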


Theorem \ref{thm:convergence} demonstrates that CommuteProp converges to an $\varepsilon$-approximate stationary point. The approximation error, characterized by $\varepsilon = \mathcal{O}\left(\lEta^2   \tau_{\max}^2 G^2 (\sum_{j \neq l} H_{l,j})^2\right)$, acts as a constant noise floor. In practice, as the model approaches the stationary point, the true gradient norm shrinks, meaning the effective local gradient bound $G^{(t)}$ naturally diminishes. In the next section, we will discuss how to further reduce the asynchronous error for preconditioned optimizers.

\subsection{Asynchronous Error Reduction Method}

The theoretical results in Lemma \ref{lemma:async_error} and Theorem \ref{thm:convergence} establish that the convergence of decoupled training is inherently bottlenecked by the accumulation of staleness-induced noise. To further dissect this mechanism, let $\mathbf{M}_j^{(t)}$ denote the adaptive scaling operator of the optimizer at step $t$ (e.g., $\mathbf{M}_j^{(t)}=\mathbf{I}$ for SGD, or $\mathbf{M}_j^{(t)}=\text{diag}(1/\sqrt{\mathbf{v}_t+\varepsilon})$ for AdamW). By unpacking the explicit update step, the asynchronous error bound can be reformulated via the local preconditioned updates:

\vspace{-0.5em}
\begin{equation}
\label{equ:general_error}
\left\| \nabla_{\lW_l}\loss(\lW^{(t)}) - \widetilde{\lGrad}_l^{(t-  \tau)} \right\|^2 \le \left( \lEta   \tau_{\max} \sum_{j \neq l} \left\| \mathbf{H}_{l,j} \mathbf{M}_j^{(t)} \widetilde{\lGrad}_j^{(t-  \tau)} \right\| \right)^2
\vspace{-0.5em}
\end{equation}

\begin{wrapfigure}{r}{0.6\textwidth}
  \centering
  \vspace{-2pt} 
  \begin{minipage}{\linewidth}
    \hrule height .8pt
    \vspace{3pt}
    \footnotesize
    \renewcommand{\baselinestretch}{0.85}
    \captionsetup{type=algorithm, font={footnotesize,bf}, skip=5pt, justification=raggedright, singlelinecheck=false}
    \caption{Kronecker-Block NS Preconditioner}
    \label{alg:ns_precond}
    \hrule height .5pt
    
    \begin{algorithmic}[1]
      \setlength{\itemsep}{1pt}
      \setlength{\parsep}{0pt}
      
      \REQUIRE Gradient matrix $U \in \mathbb{R}^{M \times N}$, block size $b$.
      
      \STATE Partition $U$ into an $R \times C$ grid of blocks $\{U_{rc}\}_{r,c=1}^{R,C}$.
      
      \STATE Maintain row covariances $\{V_r\}_{r=1}^R$ and column covariances $\{V_c\}_{c=1}^C$; row preconditioners $\{P_r\}_{r=1}^R$ and column preconditioners $\{P_c\}_{c=1}^C$. 
      
      \STATE $V_r \gets \beta V_r + (1-\beta)\sum_c U_{rc} U_{rc}^\top$
      
      \STATE $V_c \gets \beta V_c + (1-\beta)\sum_r U_{rc}^\top U_{rc}$ 
      \hfill $\triangleright$ \textbf{Low-freq covariance update}
      
      \STATE $P_r \gets V_r^{-1/4}$ and $P_c \gets V_c^{-1/4}$ \\
      \quad via Newton-Schulz iteration. \hfill $\triangleright$ \textbf{Preconditioner refresh}
    
      \STATE \textbf{return} $\tilde{U}$ with blocks $\tilde{U}_{rc} = P_r \, U_{rc} \, P_c$.
    \end{algorithmic}
    \vspace{2pt}
    \hrule height .8pt
  \end{minipage}
  \vspace{-10pt}
\end{wrapfigure}

Equation (\ref{equ:general_error}) directly indicates that the asynchronous error is driven by the interaction between the cross-block Hessian $\mathbf{H}_{l,j}$ and the scaled update direction $\mathbf{M}_j \widetilde{\lGrad}_j^{(t-  \tau)}$. In standard AdamW, $\mathbf{M}_j$ serves as a purely diagonal approximation of the local inverse covariance. Because it inherently fails to decorrelate the off-diagonal components of the local curvature ($\mathbf{H}_{j,j}$), the update direction $\mathbf{M}_j \widetilde{\lGrad}_j$ remains heavily biased toward the local high-curvature directions. When transmitted across blocks, these residual misaligned components can be amplified by the cross-block operator $\mathbf{H}_{l,j}$, exacerbating the asynchronous error accumulation.

To mitigate this, inspired by the Muon optimizer in first-order orthogonalization, we propose to explicitly \textit{whiten} the gradient space prior to applying the diagonal operator $\mathbf{M}_j$. By introducing a preconditioner $\mathcal{P}_j \approx \mathbf{H}_{j,j}^{-1/2}$, we transform the unbounded inter-block covariance into a bounded correlation operator, thereby neutralizing the asynchronous error. To achieve this inverse square root approximation efficiently without heavy eigendecomposition, we introduce the Kronecker-Block Newton-Schulz (NS) Preconditioner. For 2D parameter tensors (e.g., weight matrices), let $\mathbf{U} \in \mathbb{R}^{M \times N}$ denote the reshaped gradient matrix of the block update $\widetilde{\lGrad}_l$, as detailed in Algorithm \ref{alg:ns_precond}.


To ensure efficiency, Algorithm \ref{alg:ns_precond} employs a Kronecker-factored structure (inspired by SOAP\cite{vyas2024soap}) to tractably approximate high-dimensional second-moment statistics without materializing the full Hessian. Furthermore, since the empirical Fisher information evolves slowly during training \cite{zhao2024galore}, we implement lazy updates for both covariance aggregation and preconditioner generation. This spatio-temporal design significantly amortizes computational overhead while strictly preserving the spatial decorrelation crucial for robust asynchronous decoupling.

\section{Experiment}


\textbf{Experiment Settings.} We use Qwen2.5-1.5B-Instruct as the backbone and partition it with the Usplit architecture. To compare utility after full training without the high wall-clock cost of running every setting on real distributed systems, we use a queue-based pipeline simulator following prior split-learning works~\cite{liu2025dualguard,thapa2022splitfed,chen2024unveiling}. Forward activations are buffered up to depth $\tau_{\max}$ before each gradient update. We consider two settings: \textit{Full Training}, where asynchronous methods complete all epochs, and \textit{Equal Time}, where all methods use the same wall-clock budget (``Time constraint'' in Table~\ref{tab:utl}). 
The simulator assumes an ideal bubble-free schedule for the synchronous baseline, so the reported speedups are conservative. \textbf{In section \ref{tab:real_deploy}, we further supplement the simulation with real client-server experiments under controlled bandwidth and latency, together with PipeDream\citep{narayanan2021memory} comparisons.} Appendix~\ref{sec:appendix_impl} provides these results, simulation details, datasets, and hyperparameters.

\subsection{Model Utility}
\label{sec:utility}

Table \ref{tab:utl} reports utility across the full‑training and Iso‑time settings.We use Nesterov Async~\citep{ajanthan2025nesterov}, the most recent asynchronous method, as our asynchronous algorithm baseline, as it does not require additional weight stashing similar to our algorithm. AdamW and Muon in Table \ref{tab:utl} reports the synchronous result. Rows without “Time constraint” (e.g., AdamW, CommuteProp, Nesterov Async) correspond to full training; those marked “Time constraint” are obtained under the Iso‑time budget. Result on other base model are defferd in Appendix \ref{sec:addition_basemodel}

\begin{table}[htbp]
\vspace{-0.8em}
  \centering
  \footnotesize 
  \captionsetup{font=footnotesize, labelfont=bf} 
  \captionsetup{font=footnotesize}
  \caption{Utility comparison under full-training and Equal time budgets}
  \vspace{-1em}
  \label{tab:utl}
  \setlength{\tabcolsep}{0pt} 
  \renewcommand{\arraystretch}{0.85} 
  \begin{tabular*}{\textwidth}{@{\extracolsep{\fill}} l cccccc}
    \toprule
    & \multicolumn{2}{c}{\textbf{Classification (Acc)}} & \multicolumn{4}{c}{\textbf{Generation}} \\
    \cmidrule(lr){2-3} \cmidrule(lr){4-7}
    \textbf{Algorithm} & Emotional & AG-News & \multicolumn{2}{c}{SAMSum} & SciERC & Spider \\
    \cmidrule(lr){4-5} \cmidrule(lr){6-6} \cmidrule(lr){7-7}
    & Accuracy & Accuracy & ROUGE & METEOR & F1-Score & EM \\
    \midrule
    Zero-shot                 & 43.33 & 63.00 & 23.82 & 37.82 & 4.30  & 36.00 \\
    Nesterov Async            & 76.45 & 90.56 & 35.53 & 38.13 & 7.73  & 71.00 \\
    \midrule
    AdamW                     & 91.15 & 94.64 & 36.61 & 39.72 & 72.89 & 81.40 \\
    AdamW (Time)              & 88.45 & 94.15 & 35.42 & \textbf{40.01} & 63.10 & 80.08 \\
    CommuteProp               & \textbf{90.85} & 94.57 & 36.49 & 39.45 & 67.21 & \textbf{81.60} \\
    CommuteProp (NS)          & 90.10 & \textbf{94.61} & \textbf{36.56} & 39.41 & \textbf{72.09} & 81.00 \\
    \midrule
    Muon                      & 93.10 & 92.97 & 36.27 & 39.29 & 72.12 & 76.20 \\
    Muon (Time)               & 92.00 & \textbf{92.05} & 35.47 & \textbf{39.69} & 60.68 & 72.60 \\
    CommuteProp (Muon)        & \textbf{92.60} & 92.00 & \textbf{36.31} & 39.18 & \textbf{65.15} & \textbf{73.02} \\
    \bottomrule
  \end{tabular*}
  \vspace{-1.0em}
\end{table}

The result indicates that CommuteProp maintains accuracy comparable to synchronous AdamW across all tasks,while the NS preconditioner further narrows these already small gaps. \textbf{Since a large epoch budget is used to evaluate algorithm's utility and stability}, all methods approach nearly convergence under the Iso‑time setting, leaving only modest accuracy gaps; the throughput advantage of CommuteProp is more clearly seen in the loss curves (Fig.,\ref{fig:ablation}).

\subsection{Ablation Study}
\label{sec:ablation}
In this section, we conduct a sensitivity analysis to examine the impact of the staleness parameter $ \tau_{\max}$ and the model split architecture(denoted as $[head, bulk, tail]$, representing the number of layers placed in the first client part, the server part, and the final client part) on the performance of CommuteProp. We evaluate the performance \textbf{across 36 configurations} on the SciERC dataset.
\begin{figure}[h]
    \centering
    \includegraphics[width=\linewidth, height=0.65\linewidth]{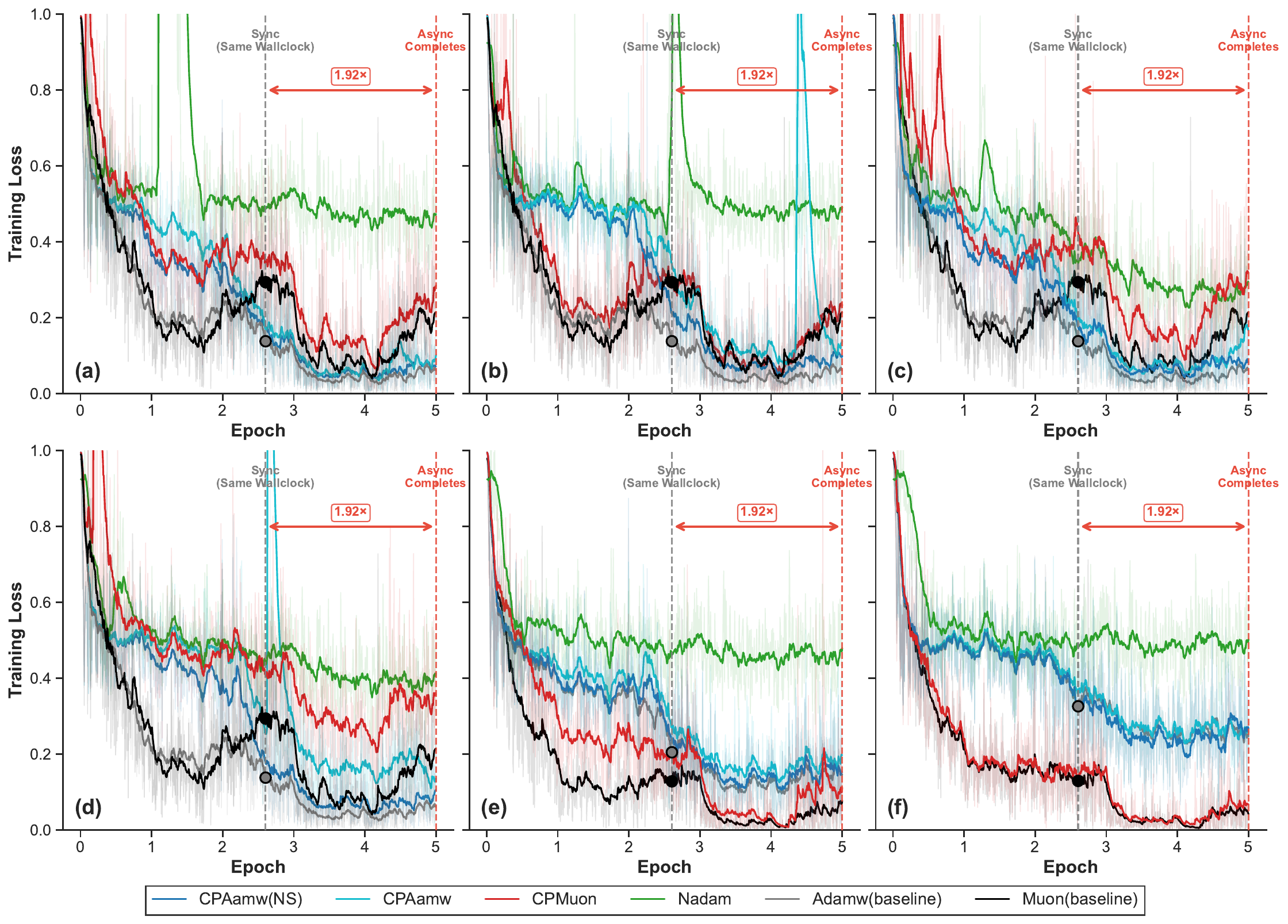}
  \captionsetup{font=footnotesize}
   \caption{
        Ablation study on the effects of $ \tau_{\max}$, $\lEta$, and split architecture(CP, short for Commutepro).
        (\textbf{a}--\textbf{c}) $\lEta = 5 \times 10^{-5}$, architecture $[3,22,3]$, with $ \tau_{\max} = 16$, $8$, and $4$, respectively.
        (\textbf{d}) $\lEta = 5 \times 10^{-5}$ and $ \tau_{\max} = 16$,  architecture $[1,26,1]$.
        (\textbf{e}--\textbf{f}) $ \tau_{\max} = 16$, architecture $[3,22,3]$, with $\lEta = 2 \times 10^{-5}$ and $1 \times 10^{-5}$, respectively.
    }
    \label{fig:ablation}
    \vspace{-0.5cm}
\end{figure}

Figure \ref{fig:ablation} verified our asynchronous error analysis(Lemma \ref{lemma:async_error}). Specifically, smaller $ \tau_{\max}$ and $\lEta$ lead to a smaller asynchronous error. Moreover, the NS preconditioner markedly reduces the asynchronous error, as its curve lies closer to the synchronous baseline. In addition, our results indicate that the proposed algorithm consistently outperforms the Nadam algorithm.

\subsection{Privacy Enhancement Integration}
\label{sec:pp}

\begin{wraptable}{r}{0.59\textwidth} 
  
  \centering
  \footnotesize
  \setlength{\tabcolsep}{1.4pt}
  \vspace{-2em} 
  \captionsetup{font=footnotesize, labelfont=bf}
  \caption{Model Utility and Attack Performance.}
  \label{tab:attack_performance_wrap}
  \begin{tabular}{lcccccc}
    \toprule
    & \multicolumn{2}{c}{\textbf{Performance}} & \multicolumn{4}{c}{\textbf{Attack Method}} \\
    \cmidrule(lr){2-3} \cmidrule(lr){4-7}
    & Rouge & Meteros & SIP & TAG & LAMP & CPTA \\
    \midrule
    w/o defense      & 86.23 & 88.48 & 86.87 & 65.70 & 63.13 & 17.75 \\
    Lora DualGuard   & 85.12 & 86.92 & 5.64  & 5.06  & 5.81  & 3.54  \\
    Lora CommuteProp & 85.18 & 87.13 & 5.64  & 4.34  & 4.57  & 3.58  \\
    \bottomrule
  \end{tabular}
  \vspace{-10pt} 
\end{wraptable}

Recent work \cite{gu2025vflair, chen2024unveiling}  shows that SL is vulnerable to gradient matching and Smd data inversion attacks. DualGuard~\citep{liu2025dualguard} was proposed as a defense. Thus, we integrate CommuteProp into DualGuard. Table~\ref{tab:attack_performance} reports the resulting utility and resilience against various attacks (detailed in Appendix \ref{sec:more_result}). The result indicates that CommuteProp achieves stronger privacy protection than synchronous DualGuard with negligible utility degradation. We attribute this improvement to the implicit regularization induced by stale activations. More experiment result are deferred to Appendix~\ref{sec:more_result}


\section{Conclusion and Future Work}
We proposed CommuteProp, an asynchronous split‑learning algorithm that decouples training into cross‑block and in‑block phases to overlap communication with computation. By exploiting the inherently small cross‑block Hessian of LLMs, we theoretically proved that our algorithm is resistant to asynchronous errors, and we accordingly introduced the NS preconditioner to further mitigate residual noise. Experiments show that CommuteProp achieves substantial throughput gains, maintains accuracy, and can integrate with or even enhance existing privacy‑preserving methods.

Notably, our theoretical analysis is not confined to Transformers and can be applied to feedforward networks as well. With an appropriate partition, the cross‑block Hessian norm could remain bounded, thus carrying over the convergence guarantees. However, identifying an optimal splitting strategy for feedforward architectures remains challenging, because they lack the shortcut mechanism that naturally induces small inter‑block interaction. Naive splitting may inflate the cross‑block Hessian and break our assumptions. We consider the design of principled splitting rules for general architectures as a meaningful direction for future work.



\bibliographystyle{plainnat}
\bibliography{ref}

@article{gupta2018distributed,
  title={Distributed learning of deep neural network over multiple agents},
  author={Gupta, Otkrist and Raskar, Ramesh},
  journal={Journal of Network and Computer Applications},
  volume={116},
  pages={1--8},
  year={2018},
  publisher={Elsevier}
}

@article{vepakomma2018split,
  title={Split learning for health: Distributed deep learning without sharing raw patient data},
  author={Vepakomma, Praneeth and Gupta, Otkrist and Swedish, Tristan and Raskar, Ramesh},
  journal={arXiv preprint arXiv:1812.00564},
  year={2018}
}

@inproceedings{liu2025dualguard,
  title={DualGuard: A Parameter Space Transformation Approach for Bidirectional Defense in Split-Based LLM Fine-Tuning},
  author={Liu, Zihan and Wang, Yizhen and Wang, Rui and Wu, Sai},
  booktitle={Proceedings of the 63rd Annual Meeting of the Association for Computational Linguistics (Volume 1: Long Papers)},
  pages={17065--17080},
  year={2025}
}

@article{gu2026differentially,
  title={Differentially Private and Communication Efficient Large Language Model Split Inference via Stochastic Quantization and Soft Prompt},
  author={Gu, Yujie and Jin, Richeng and Ji, Xiaoyu and Jin, Yier and Xu, Wenyuan},
  journal={arXiv preprint arXiv:2602.11513},
  year={2026}
}

@inproceedings{lyu2023optimal,
  title={Optimal resource allocation for u-shaped parallel split learning},
  author={Lyu, Song and Lin, Zheng and Qu, Guanqiao and Chen, Xianhao and Huang, Xiaoxia and Li, Pan},
  booktitle={2023 IEEE Globecom Workshops (GC Wkshps)},
  pages={197--202},
  year={2023},
  organization={IEEE}
}

@article{harlap2018pipedream,
  title={Pipedream: Fast and efficient pipeline parallel dnn training},
  author={Harlap, Aaron and Narayanan, Deepak and Phanishayee, Amar and Seshadri, Vivek and Devanur, Nikhil and Ganger, Greg and Gibbons, Phil},
  journal={arXiv preprint arXiv:1806.03377},
  year={2018}
}

@article{qi2023zero,
  title={Zero bubble pipeline parallelism},
  author={Qi, Penghui and Wan, Xinyi and Huang, Guangxing and Lin, Min},
  journal={arXiv preprint arXiv:2401.10241},
  year={2023}
}

@article{liu2024deepseek,
  title={Deepseek-v3 technical report},
  author={Liu, Aixin and Feng, Bei and Xue, Bing and Wang, Bingxuan and Wu, Bochao and Lu, Chengda and Zhao, Chenggang and Deng, Chengqi and Zhang, Chenyu and Ruan, Chong and others},
  journal={arXiv preprint arXiv:2412.19437},
  year={2024}
}

@inproceedings{gu2025vflair,
  title={Vflair-llm: A comprehensive framework and benchmark for split learning of llms},
  author={Gu, Zixuan and Fan, Qiufeng and Sun, Long and Liu, Yang and Ye, Xiaojun},
  booktitle={Proceedings of the 31st ACM SIGKDD Conference on Knowledge Discovery and Data Mining V. 2},
  pages={5470--5481},
  year={2025}
}

@article{kairouz2021advances,
  title={Advances and open problems in federated learning},
  author={Kairouz, Peter and McMahan, H Brendan},
  journal={Foundations and trends in machine learning},
  volume={14},
  number={1-2},
  pages={1--210},
  year={2021},
  publisher={Emerald Publishing Limited}
}

@article{vaswani2017attention,
  title={Attention is all you need},
  author={Vaswani, Ashish and Shazeer, Noam and Parmar, Niki and Uszkoreit, Jakob and Jones, Llion and Gomez, Aidan N and Kaiser, {\L}ukasz and Polosukhin, Illia},
  journal={Advances in neural information processing systems},
  volume={30},
  year={2017}
}

@inproceedings{kelley1959critical,
  title={Critical-path planning and scheduling},
  author={Kelley Jr, James E and Walker, Morgan R},
  booktitle={Papers presented at the December 1-3, 1959, eastern joint IRE-AIEE-ACM computer conference},
  pages={160--173},
  year={1959}
}

@inproceedings{narayanan2021memory,
  title={Memory-efficient pipeline-parallel dnn training},
  author={Narayanan, Deepak and Phanishayee, Amar and Shi, Kaiyu and Chen, Xie and Zaharia, Matei},
  booktitle={International Conference on Machine Learning},
  pages={7937--7947},
  year={2021},
  organization={PMLR}
}

@article{yang2021pipemare,
  title={Pipemare: Asynchronous pipeline parallel dnn training},
  author={Yang, Bowen and Zhang, Jian and Li, Jonathan and R{\'e}, Christopher and Aberger, Christopher and De Sa, Christopher},
  journal={Proceedings of Machine Learning and Systems},
  volume={3},
  pages={269--296},
  year={2021}
}

@article{ajanthan2025nesterov,
  title={Nesterov method for asynchronous pipeline parallel optimization},
  author={Ajanthan, Thalaiyasingam and Ramasinghe, Sameera and Zuo, Yan and Avraham, Gil and Long, Alexander},
  journal={arXiv preprint arXiv:2505.01099},
  year={2025}
}

@misc{Karpathy2022,
  author = {Andrej Karpathy},
  title = {\text{NanoGPT}},
  year = {2022},
  publisher = {GitHub},
  journal = {GitHub repository},
  howpublished = {\url{https://github.com/karpathy/nanoGPT}},
  commit = {325be85d9be8c81b436728a420e85796c57dba7e}
}

@article{zhao2024galore,
  title={Galore: Memory-efficient llm training by gradient low-rank projection},
  author={Zhao, Jiawei and Zhang, Zhenyu and Chen, Beidi and Wang, Zhangyang and Anandkumar, Anima and Tian, Yuandong},
  journal={arXiv preprint arXiv:2403.03507},
  year={2024}
}

@article{zhang2024transformers,
  title={Why transformers need adam: A hessian perspective},
  author={Zhang, Yushun and Chen, Congliang and Ding, Tian and Li, Ziniu and Sun, Ruoyu and Luo, Zhi-Quan},
  journal={Advances in neural information processing systems},
  volume={37},
  pages={131786--131823},
  year={2024}
}

@article{dong2025towards,
  title={Towards quantifying the hessian structure of neural networks},
  author={Dong, Zhaorui and Zhang, Yushun and Yao, Jianfeng and Sun, Ruoyu},
  journal={arXiv preprint arXiv:2505.02809},
  year={2025}
}

@article{zhang2015character,
  title={Character-level convolutional networks for text classification},
  author={Zhang, Xiang and Zhao, Junbo and LeCun, Yann},
  journal={Advances in neural information processing systems},
  volume={28},
  year={2015}
}

@inproceedings{gliwa2019samsum,
  title={SAMSum corpus: A human-annotated dialogue dataset for abstractive summarization},
  author={Gliwa, Bogdan and Mochol, Iwona and Biesek, Maciej and Wawer, Aleksander},
  booktitle={Proceedings of the 2nd Workshop on New Frontiers in Summarization},
  pages={70--79},
  year={2019}
}

@inproceedings{saravia2018carer,
  title={CARER: Contextualized affect representations for emotion recognition},
  author={Saravia, Elvis and Liu, Hsien-Chi Toby and Huang, Yen-Hao and Wu, Junlin and Chen, Yi-Shin},
  booktitle={Proceedings of the 2018 conference on empirical methods in natural language processing},
  pages={3687--3697},
  year={2018}
}

@inproceedings{luan2018multi,
  title={Multi-task identification of entities, relations, and coreference for scientific knowledge graph construction},
  author={Luan, Yi and He, Luheng and Ostendorf, Mari and Hajishirzi, Hannaneh},
  booktitle={Proceedings of the 2018 conference on empirical methods in natural language processing},
  pages={3219--3232},
  year={2018}
}

@misc{jordan2024muon,
  author       = {Keller Jordan and Yuchen Jin and Vlado Boza and Jiacheng You and
                  Franz Cesista and Laker Newhouse and Jeremy Bernstein},
  title        = {Muon: An optimizer for hidden layers in neural networks},
  year         = {2024},
  url          = {https://kellerjordan.github.io/posts/muon/}
}

@article{voigt2017eu,
  title={The eu general data protection regulation (gdpr)},
  author={Voigt, Paul and Von dem Bussche, Axel},
  journal={A practical guide, 1st ed., Cham: Springer International Publishing},
  volume={10},
  number={3152676},
  pages={10--5555},
  year={2017},
  publisher={Springer}
}

@article{hui2024qwen2,
  title={Qwen2. 5-coder technical report},
  author={Hui, Binyuan and Yang, Jian and Cui, Zeyu and Yang, Jiaxi and Liu, Dayiheng and Zhang, Lei and Liu, Tianyu and Zhang, Jiajun and Yu, Bowen and Lu, Keming and others},
  journal={arXiv preprint arXiv:2409.12186},
  year={2024}
}

@article{loshchilov2017decoupled,
  title={Decoupled weight decay regularization},
  author={Loshchilov, Ilya and Hutter, Frank},
  journal={arXiv preprint arXiv:1711.05101},
  year={2017}
}

@inproceedings{yu-etal-2018-spider,
    title = "{S}pider: A Large-Scale Human-Labeled Dataset for Complex and Cross-Domain Semantic Parsing and Text-to-{SQL} Task",
    author = "Yu, Tao  and
      Zhang, Rui  and
      Yang, Kai  and
      Yasunaga, Michihiro  and
      Wang, Dongxu  and
      Li, Zifan  and
      Ma, James  and
      Li, Irene  and
      Yao, Qingning  and
      Roman, Shanelle  and
      Zhang, Zilin  and
      Radev, Dragomir",
    editor = "Riloff, Ellen  and
      Chiang, David  and
      Hockenmaier, Julia  and
      Tsujii, Jun{'}ichi",
    booktitle = "Proceedings of the 2018 Conference on Empirical Methods in Natural Language Processing",
    month = oct # "-" # nov,
    year = "2018",
    address = "Brussels, Belgium",
    publisher = "Association for Computational Linguistics",
    url = "https://aclanthology.org/D18-1425",
    doi = "10.18653/v1/D18-1425",
    pages = "3911--3921",
    archivePrefix={arXiv},
    eprint={1809.08887},
    primaryClass={cs.CL},
}

@article{vyas2024soap,
  title={Soap: Improving and stabilizing shampoo using adam},
  author={Vyas, Nikhil and Morwani, Depen and Zhao, Rosie and Kwun, Mujin and Shapira, Itai and Brandfonbrener, David and Janson, Lucas and Kakade, Sham},
  journal={arXiv preprint arXiv:2409.11321},
  year={2024}
}

@inproceedings{thapa2022splitfed,
  title={Splitfed: When federated learning meets split learning},
  author={Thapa, Chandra and Arachchige, Pathum Chamikara Mahawaga and Camtepe, Seyit and Sun, Lichao},
  booktitle={Proceedings of the AAAI conference on artificial intelligence},
  volume={36},
  number={8},
  pages={8485--8493},
  year={2022}
}

@inproceedings{chen2024unveiling,
  title={Unveiling the vulnerability of private fine-tuning in split-based frameworks for large language models: A bidirectionally enhanced attack},
  author={Chen, Guanzhong and Qin, Zhenghan and Yang, Mingxin and Zhou, Yajie and Fan, Tao and Du, Tianyu and Xu, Zenglin},
  booktitle={Proceedings of the 2024 on ACM SIGSAC Conference on Computer and Communications Security},
  pages={2904--2918},
  year={2024}
}

@article{cobbe2021gsm8k,
  title={Training Verifiers to Solve Math Word Problems},
  author={Cobbe, Karl and Kosaraju, Vineet and Bavarian, Mohammad and Chen, Mark and Jun, Heewoo and Kaiser, Lukasz and Plappert, Matthias and Tworek, Jerry and Hilton, Jacob and Nakano, Reiichiro and Hesse, Christopher and Schulman, John},
  journal={arXiv preprint arXiv:2110.14168},
  year={2021}
}

@article{grattafiori2024llama,
  title={The llama 3 herd of models},
  author={Grattafiori, Aaron and Dubey, Abhimanyu and Jauhri, Abhinav and Pandey, Abhinav and Kadian, Abhishek and Al-Dahle, Ahmad and Letman, Aiesha and Mathur, Akhil and Schelten, Alan and Vaughan, Alex and others},
  journal={arXiv preprint arXiv:2407.21783},
  year={2024}
}

\appendix
\section{Convergence Analysis}

To formally bound the asynchronous error caused by the delayed gradients in Phase 2, we first introduce two standard assumptions commonly used in the convergence analysis of non-convex optimization.

\begin{proof}[\textbf{Proof of Lemma \ref{lemma:async_error}}]
Recall from Section \ref{sec:two_phase} that the surrogate gradient used in Phase 2 is defined as
\begin{equation*}
    \widetilde{\lGrad}_l^{(t-\tau)} = \nabla_{\lW_l} L_{\mathrm{local}}\bigl(\lW_l^{(t)}; 
    \lAct_l^{(t-\tau)}, \lGrad_{l+1}^{(t-\tau)}\bigr),
\end{equation*}
where the activations $\lAct_l^{(t-\tau)}$ and its gradients $\lGrad_{l+1}^{(t-\tau)}$ are 
$\tau$ steps stale while the local weight $\lW_l^{(t)}$ is fresh.
In the synchronous case ($\tau = 0$) the stale signals coincide with the current ones:
$\lAct_l^{(t-0)} = \lAct_l^{(t)}$, $\lGrad_{l+1}^{(t-0)} = \lGrad_{l+1}^{(t)}$.
The surrogate gradient therefore reduces to the standard back‑propagation gradient obtained by 
applying the chain rule at the current, globally synchronous state $\lW^{(t)}$:
\begin{equation} \label{eq:chain}
    \frac{\partial \loss(\lW_l)}{\partial \lW_l}
    = \frac{\partial \loss(\lW_l)}{\partial \lAct_{l+1}^{(t)}} 
      \frac{\partial \lAct_{l+1}^{(t)}}{\partial \lW_l}
    = \widetilde{\lGrad}_l^{(t)}.
\end{equation}


In our asynchronous system, however, to break the communication bottleneck, the target block $l$ computes its surrogate local gradient $\widetilde{\lGrad}_l^{(t-\tau)}$ using stale forward activations and stale backward gradients delayed by $\tau$ steps. Thus, activations and gradients is mathematically equivalent to evaluating the loss using the $\tau$-delayed weights of the preceding and succeeding blocks, while using the fresh weights for the local block $l$. 
Naturally, this allows us to rigorously define the surrogate local gradient as the true gradient evaluated on a \textit{mixed-state parameter subspace} $\widetilde{\lW}$ with each block $\widetilde{\lW}_k$:
\begin{equation*}
    \widetilde{\lW}_k = 
    \begin{cases} 
      \lW_l^{(t)}, & \text{if } k = l \text{ (fresh local weights)}\\
      \lW_k^{(t-\tau)}, & \text{if } k \neq l \text{ (stale external signals)}
    \end{cases}
\end{equation*}

Thus, $\widetilde{\lGrad}_l^{(t-\tau)} \equiv \nabla_{\lW_l} \loss(\widetilde{\lW})$. The asynchronous error can be explicitly defined as the distance between the gradient evaluated at the synchronous state $\lW^{(t)}$ and the mixed state $\widetilde{\lW}$. By the Mean Value Theorem and Assumption \ref{assum:cross_hessian}, we replace the maximum Hessian norm with the cross-block Lipschitz constant $H_{l,j}$:
\begin{equation*}
    \left\| \nabla_{\lW_l}\loss(\lW^{(t)}) - \widetilde{\lGrad}_l^{(t-\tau)} \right\| \le \sum_{j=1}^{\bNum} H_{l,j} \left\| \lW_j^{(t)} - \widetilde{\lW}_j \right\|
\end{equation*}

Since the local block $l$ inherently uses its most up-to-date weights during the surrogate local gradient computation, the parameter drift for block $l$ is zero:
\begin{equation*}
    \left\| \lW_l^{(t)} - \widetilde{\lW}_l \right\| = \left\| \lW_l^{(t)} - \lW_l^{(t)} \right\| = 0
\end{equation*}
Consequently, the self-Hessian term $H_{l,l}$ is perfectly eliminated from the summation. The error is solely bounded by the cross-block Hessians ($j \neq l$):
\begin{equation}
    \left\| \nabla_{\lW_l}\loss(\lW^{(t)}) - \widetilde{\lGrad}_l^{(t-\tau)} \right\| \le \sum_{j \neq l} H_{l,j} \left\| \lW_j^{(t)} - \lW_j^{(t-\tau)} \right\| \label{eq:drift_sum}
\end{equation}
Finally, using the triangle inequality and Assumption \ref{assum:cross_hessian}, we bound the accumulated parameter drift over the $\tau$ steps:
\begin{equation*}
    \left\| \lW_j^{(t)} - \lW_j^{(t-\tau)} \right\| \le \sum_{k=1}^{\tau} \left\| \lW_j^{(t-k+1)} - \lW_j^{(t-k)} \right\| \le \lEta \tau G
\end{equation*}
Substituting this back into Eq. (\ref{eq:drift_sum}) and squaring both sides yields the final bound:
\begin{equation*}
    \left\| \nabla_{\lW_l}\loss(\lW^{(t)}) - \widetilde{\lGrad}_l^{(t-\tau)} \right\|^2 \le \left( \lEta \tau G \sum_{j \neq l} H_{l,j} \right)^2
\end{equation*}
This completes the proof.
\end{proof}

\begin{proof}[\textbf{Proof of Theorem \ref{thm:convergence}}]
From the Taylor Expansion result established in Equation \ref{eq:descent}, the loss descent for an individual block $l$ is bounded by:
\begin{equation*}
\loss(\lW_l^{(t+1)}) \le \loss(\lW_l^{(t)}) - \frac{\lEta}{2} \| \nabla_{\lW_l} \loss(\lW_l^{(t)}) \|^2 - \left(\frac{\lEta}{2} - \frac{L_l \lEta^2}{2}\right) \| \tilde{\lGrad}_l^{(t-\tau)} \|^2 + \frac{\lEta}{2} \| \nabla_{\lW_l} \loss(\lW_l^{(t)}) - \tilde{\lGrad}_l^{(t-\tau)} \|^2
\end{equation*}

Given the learning rate condition $\lEta \le \frac{1}{L_l}$, it follows that the coefficient of the third term is non-negative, meaning $\left(\frac{\lEta}{2} - \frac{L_l \lEta^2}{2}\right) \ge 0$. Therefore, we can drop the negative third term on the right-hand side to obtain a relaxed upper bound:
\begin{equation*}
\loss(\lW_l^{(t+1)}) \le \loss(\lW_l^{(t)}) - \frac{\lEta}{2} \| \nabla_{\lW_l} \loss(\lW_l^{(t)}) \|^2 + \frac{\lEta}{2} \| \nabla_{\lW_l} \loss(\lW_l^{(t)}) - \tilde{\lGrad}_l^{(t-\tau)} \|^2
\end{equation*}

Substituting the asynchronous error bound derived in Lemma \ref{lemma:async_error}  into the above inequality, we have:
\begin{equation*}
\loss(\lW_l^{(t+1)}) \le \loss(\lW_l^{(t)}) - \frac{\lEta}{2} \| \nabla_{\lW_l} \loss(\lW_l^{(t)}) \|^2 + \frac{\lEta}{2} \left( \lEta \tau G \sum_{j \neq l} H_{l,j} \right)^2
\end{equation*}

Rearranging the terms to isolate the gradient norm yields:
\begin{equation*}
\frac{\lEta}{2} \| \nabla_{\lW_l} \loss(\lW_l^{(t)}) \|^2 \le \loss(\lW_l^{(t)}) - \loss(\lW_l^{(t+1)}) + \frac{\lEta}{2} \left( \lEta \tau G \sum_{j \neq l} H_{l,j} \right)^2
\end{equation*}

Multiplying both sides by $\frac{2}{\lEta}$ gives:
\begin{equation*}
\| \nabla_{\lW_l} \loss(\lW_l^{(t)}) \|^2 \le \frac{2}{\lEta} [\loss(\lW_l^{(t)}) - \loss(\lW_l^{(t+1)})] + \left( \lEta \tau G \sum_{j \neq l} H_{l,j} \right)^2
\end{equation*}

Summing this inequality over $t = 0$ to $T-1$ and dividing by the total number of steps $T$ leads to a telescoping sum on the loss evaluations:
\begin{equation*}
\frac{1}{T} \sum_{t=0}^{T-1} \| \nabla_{\lW_l} \loss(\lW_l^{(t)}) \|^2 \le \frac{2}{\lEta T} [\loss(\lW_l^{(0)}) - \loss(\lW_l^{(T)})] + \lEta^2 \tau^2 G^2 \left( \sum_{j \neq l} H_{l,j} \right)^2
\end{equation*}

Since the loss is strictly bounded below by $\loss^*$, we know that $\loss(\lW_l^{(0)}) - \loss(\lW_l^{(T)}) \le \loss(\lW_l^{(0)}) - \loss^*$. Substituting this absolute bound gives our final result:
\begin{equation*}
\frac{1}{T} \sum_{t=0}^{T-1} \| \nabla_{\lW_l} \loss(\lW^{(t)}) \|^2 \le \frac{2[\loss(\lW_l^{(0)}) - \loss^*]}{\lEta T} + \lEta^2 \tau^2 G^2 \left( \sum_{j \neq l} H_{l,j} \right)^2
\end{equation*}
\end{proof}
\section{Implementation Details}
\label{sec:appendix_impl}
\subsection{Model and Split Configuration}
We fine-tune the Qwen2.5-1.5B-Instruct model \cite{hui2024qwen2}.  
The U‑shaped split architecture is denoted by a triplet $[n_{\text{head}}, n_{\text{bulk}}, n_{\text{tail}}]$ giving the number of Transformer blocks on the client head, server bulk, and client tail, respectively.  
Throughout Section~5.2 we use the default split $[3,22,3]$.  
All methods employ a batch size of 16 with a cosine learning rate schedule.
\subsection{Optimizer Details}
\textbf{Synchronous AdamW (baseline).}  
We use the standard PyTorch implementation of AdamW \cite{loshchilov2017decoupled} with 
$\beta_1{=}0.9,\beta_2{=}0.999$, weight decay $0.09$, and $\epsilon{=}1\times10^{-8}$.
The base learning rate is set to $\eta{=}2\times10^{-5}$ in the main experiments.  

\textbf{CommuteProp and CommuteProp (NS).}  
CommuteProp wraps any base optimizer; the same AdamW hyperparameters serve the CommuteProp~(AdamW) variant.  
The staleness delay $\tau$ is emulated via a queue‑based simulator: a gradient update is triggered only after the corresponding forward activation has been queued for exactly $\tau$ micro‑batches.  
For the utility comparison (Table \ref{tab:utl}) we set $\tau{=}16$. The ``NS'' variant applies the Kronecker‑Block Newton–Schulz preconditioner (Algorithm~\ref{alg:ns_precond}) before the AdamW step; we use a block size $b{=}64$ and refresh the preconditioner every 200 steps.

\textbf{Nesterov Async.}  
We re‑implement the asynchronous pipeline parallel method of \cite{ajanthan2025nesterov} with its default momentum factor $0.9$ and Nesterov correction; the staleness delay is fixed at $\tau{=}16$ for the fairness comparison.

\textbf{Synchronous Muon (baseline).}  
We use the standard PyTorch implementation of Muon \cite{jordan2024muon} with 
$\beta_1{=}0.9,\beta_2{=}0.999$, weight decay $0.09$, and $\epsilon{=}1\times10^{-8}$.
The base learning rate is set to $\eta{=}2\times10^{-5}$ in the main experiments.  

\textbf{CommuteProp (Muon).}  
The ``Muon'' variant replaces AdamW with the Muon optimizer \cite{jordan2024muon} (momentum $0.95$, Nesterov acceleration enabled) while keeping the CommuteProp two‑phase pipeline unchanged.  

\subsection{Full-Training Comparison and Iso‑time Comparison}

\label{sec:iso_time}

In our experiments we report both full‑training and iso‑time comparisons. The full‑training comparison verifies the utility of CommuteProp in end‑to‑end training, while the iso‑time comparison, controlled by equal wall‑clock budgets, directly captures the efficiency gain in the split learning scenario.

 Our experiment use one single NVIDIA RTX A6000. For the iso‑time comparison, we first profile the per‑stage computation costs. We record the forward and backward times of the synchronous baseline (AdamW) and of CommuteProp (NS) over the first 50 training iterations and average the measurements. To isolate the speedup coming purely from computation‑communication overlap, we exclude the time of optimizer updates, gradient zeroing, and learning rate scheduling, as these operations are nearly identical in both settings, non‑overlappable, and would only add a constant offset to the end‑to‑end latency without affecting the relative gain from pipelining. Denote the $l=1$ as the client node, $l=2$ as the server node. The resulting mean per-stage costs are:

\begin{table}[htbp]
  \centering
  \small
  \captionsetup{font=small}
  \caption{Per‑stage computation costs profiled over the first 50 iterations.}
  \label{tab:stage_cost}
  \resizebox{\textwidth}{!}{
  \begin{tabular}{@{}lll@{}}
    \toprule
    Method & Stage cost (s) & Notes \\
    \midrule
    Sync baseline & $t_{1,s}+t_{2,s} =0.089$ & unified forward+backward \\
    \multirow{2}{*}{CommuteProp (NS)} 
      & Phase~1: $t_{1,1},t_{2,1}=[0.0042,\,0.0186]$ & \\
      & Phase~2: $t_{1,2},t_{2,2}=[0.0364,\,0.0695]$ & separation of computation and update \\
    \bottomrule
  \end{tabular}
  }
\end{table}

By using the Equation \ref{equ:efficiency} and assuming one-way communication time is $0.05$, we have $\kappa_a/\kappa_s=1.92$; under an equal wall‑clock budget the asynchronous method can process $1.92\times$ more data. Thus, under the iso‑time constraint, the synchronous epoch budget is scaled by the inverse of the speedup. For example, if CommuteProp trains $5$ epochs, under the time-constraint scenario, it trains for only $5/1.92 \approx 2.6$ epochs.

\subsection{Client-server deployment experiment}
\label{sec:real_exp}

To validate CommuteProp beyond the queue-based simulation, we conduct real distributed client--server experiments on two GPUs and use Linux \texttt{tc} to control the bandwidth (1 Gbps) and latency. Table~\ref{tab:real_deploy} reports the per-iteration time of the synchronous baseline and CommuteProp under different latency settings. The measured speedups are consistently higher than the simulated estimates, confirming that the bubble-free synchronous simulation provides a conservative evaluation of CommuteProp's practical advantage.

\begin{table}[H]
  \centering
  \footnotesize
  \captionsetup{font=footnotesize, labelfont=bf}
  \caption{Real two-GPU client--server experiments under 1 Gbps bandwidth, with latency controlled by Linux \texttt{tc}.}
  \label{tab:real_deploy}
  \begin{tabular}{ccccc}
    \toprule
    \textbf{Configured latency} & \textbf{Measured latency} & \textbf{Sync/iter (s)} & \textbf{Async/iter (s)} & \textbf{Speedup} \\
    \midrule
    10 ms  & 10.9 ms  & 0.155 & 0.068 & 2.28$\times$ \\
    50 ms  & 51.0 ms  & 0.325 & 0.067 & 4.89$\times$ \\
    100 ms & 101.2 ms & 0.534 & 0.066 & 8.07$\times$ \\
    200 ms & 201.4 ms & 0.932 & 0.064 & 14.55$\times$ \\
    \bottomrule
  \end{tabular}
\end{table}

To further benchmark CommuteProp against stronger system and scheduling baselines, we additionally include PipeDream-2BW, a weight-stashing pipeline method, and evaluate all methods under the same 15-minute wall-clock budget; the classification and generation results are reported in Table~\ref{tab:wallclock}. CommuteProp achieves competitive or better performance than PipeDream-2BW under the same wall-clock budget, particularly when $ \tau=4$. We initially selected Nesterov Async as the primary asynchronous baseline because it avoids multi-version weight stashing and is therefore better suited to memory-constrained split-learning clients.

\begin{table}[htbp]
  \centering
  \footnotesize
  \captionsetup{font=footnotesize, labelfont=bf}
  \caption{Model utility under the 15-minute equal wall-clock budget. For each dataset--metric pair, the two columns correspond to $ \tau=4$ and $ \tau=16$.}
  \label{tab:wallclock}
  \setlength{\tabcolsep}{0pt}
  \begin{tabular*}{\textwidth}{@{\extracolsep{\fill}} l cccccccccc}
    \toprule
    & \multicolumn{4}{c}{\textbf{Classification (Acc)}} & \multicolumn{6}{c}{\textbf{Generation}} \\
    \cmidrule(lr){2-5} \cmidrule(lr){6-11}
    & \multicolumn{2}{c}{\textbf{Emotions}} & \multicolumn{2}{c}{\textbf{AG-News}} & \multicolumn{4}{c}{\textbf{SAMSum}} & \multicolumn{2}{c}{\textbf{SciERC}} \\
    \cmidrule(lr){2-3} \cmidrule(lr){4-5} \cmidrule(lr){6-9} \cmidrule(lr){10-11}
    & \multicolumn{2}{c}{\textbf{Acc}}  & \multicolumn{2}{c}{\textbf{Acc}} & \multicolumn{2}{c}{\textbf{ROUGE-L}} & \multicolumn{2}{c}{\textbf{METEOR}} & \multicolumn{2}{c}{\textbf{F1}} \\
    \cmidrule(lr){2-3} \cmidrule(lr){4-5} \cmidrule(lr){6-7} \cmidrule(lr){8-9} \cmidrule(lr){10-11}
    \textbf{Algorithm} & $ \tau=4$  & $\tau=16$ & $ \tau=4$  & $\tau=16$ & $ \tau=4$ & $\tau=16$ & $ \tau=4$  & $\tau=16$ & $ \tau=4$ &  $\tau=16$ \\
    \midrule
    Sync                & 0.7998 & 0.7378 & 0.8994 & 0.9120 & 0.3576 & 0.3585 & 0.3052 & 0.3042 & 0.3915 & 0.3805 \\
    CommuteProp         & 0.9076 & 0.9189 & 0.9220 & 0.9320 & 0.3704 & 0.3743 & 0.3203 & 0.3196 & 0.7747 & 0.7591 \\
    CommuteProp (NS)    & 0.8973 & 0.9097 & 0.9271 & 0.9300 & 0.3698 & 0.3707 & 0.3213 & 0.3183 & 0.7596 & 0.7814 \\
    CommuteProp (Muon)  & 0.9189 & 0.9251 & 0.8960 & 0.9120 & 0.3788 & 0.3999 & 0.3199 & 0.3402 & 0.7199 & 0.7868 \\
    PipeDream-2BW       & 0.9076 & 0.8819 & 0.9340 & 0.9060 & 0.3694 & 0.3697 & 0.3202 & 0.3159 & 0.7545 & 0.6743 \\
    Nesterov Async      & 0.6653 & 0.6181 & 0.8820 & 0.8800 & 0.3396 & 0.3323 & 0.2882 & 0.2782 & 0.0597 & 0.0623 \\
    \bottomrule
  \end{tabular*}
\end{table}

\subsection{Ablation Setup}
The ablation study (Section~5.3, Figure~4) sweeps three factors while keeping all other settings at their default values:
\begin{enumerate}
    \item \textbf{Staleness delay $\tau$}: $\{4, 8, 16\}$ steps.
    \item \textbf{Learning rate $\eta$}: $\{1\times10^{-5}, 2\times10^{-5}, 5\times10^{-5}\}$.
    \item \textbf{Split architecture}: $[1,26,1]$, $[3,22,3]$
\end{enumerate}
When not varied, the default values are $\tau{=}16$, $\eta{=}2\times10^{-5}$, and the $[3,22,3]$ split.  
Table~\ref{tab:ablation_grid} summarises the parameter combinations evaluated; any run not listed inherits the defaults described above.

\begin{table}[h]
  \centering
  \small
  \captionsetup{font=small}
  \caption{Grid of ablation settings.  The default configuration ($\tau{=}16$, $\eta{=}2{\times}10^{-5}$, $[3,22,3]$) is highlighted and serves as the anchor for all single‑factor sweeps.}
  \label{tab:ablation_grid}
  \begin{tabular}{@{}llll@{}}
    \toprule
    Varied factor & Fixed factors & Values explored \\
    \midrule
    Staleness $\tau$ & $\eta{=}2{\times}10^{-5}$, split $[3,22,3]$ & 4, 8, \textbf{16} \\
    Learning rate $\eta$ & $\tau{=}16$, split $[3,22,3]$ & $1{\times}10^{-5}$, $\mathbf{2{\times}10^{-5}}$, $5{\times}10^{-5}$ \\
    Architecture & $\tau{=}16$, $\eta{=}2{\times}10^{-5}$ & $[1,26,1]$, $\mathbf{[3,22,3]}$ \\
    \bottomrule
  \end{tabular}
\end{table}
\subsection{Datasets and Evaluation}

We evaluate on four diverse NLP benchmarks: Emotion, AG-News, SAMSum, SciERC and Spider.  
Emotion~\cite{saravia2018carer} is a 6-class emotion classification dataset (sadness, joy, love, anger, fear, surprise) with 16,000 training samples and 2,000 test samples; texts are short social media posts with an average length of approximately 20 words.  
AG News~\cite{zhang2015character} is a 4-class topic classification dataset (World, Sports, Business, Sci/Tech) with 120,000 training samples and 7,600 test samples; texts are news articles averaging 40--50 words.  
SAMSum~\cite{gliwa2019samsum} is a dialogue summarization dataset with 14,732 training samples and 819 test samples; each dialogue contains multiple turns of casual conversation, and the reference summaries average approximately 20 words.  
SciERC~\cite{luan2018multi} is a scientific relation extraction dataset with 3,219 training samples and 974 test samples; texts are scientific paper abstracts, and the task is to classify the relation between annotated entity pairs into one of 7 relation types (labels 0--6).
Spider~\cite{yu-etal-2018-spider} is a cross-domain text-to-SQL dataset with 7,000 training samples and 1,034 test samples; each example provides a natural language question paired with the corresponding SQL query spanning 200 databases.
For all datasets, we report final results on the official test split.  
Emotion, AG-News and Spider are trained for 3 epochs, while SAMSum and SciERC are trained for 5 epochs. 

For preprocessing, we apply the Qwen2.5-1.5B-Instruct chat template  to format each example into a consistent instruction-following format, and truncate all sequences to a maximum length of 2,048 tokens. For Spider, the instruction asks the model to translate the question into an SQL query.  
Padding is performed dynamically per batch, which also masks the user-instruction portion of each sequence in the loss computation using \texttt{labels=-100}, so that the model is trained only on the assistant response tokens.

Reported metrics are Accuracy (for classification), ROUGE-L and METEOR (for summarization), macro-F1 (for relation extraction), and exact-set-match (EM) accuracy (for text-to-SQL), consistent with prior work.  

For Emotion (6-class emotion classification) and AG-News (4-class topic classification), we compute accuracy as the fraction of samples where the model's predicted label exactly matches the ground-truth label.  
For SAMSum (dialogue summarization), we report ROUGE-L F-measure as the primary metric and METEOR as a secondary metric; both are computed using the HuggingFace \texttt{evaluate} library with default settings (ROUGE-L with stemming enabled via \texttt{use\_stemmer=True}).  
For SciERC (scientific relation extraction with 7 relation types), we report macro-averaged F1 score across all 7 classes.
For Spider, we report exact-set-match accuracy, which compares the execution results of the predicted SQL query against the gold query on the target database—a prediction is considered correct only if the two result sets are identical.

During inference, we use greedy decoding (\texttt{do\_sample=False}) with dataset-specific maximum generation lengths: 10 tokens for Emotion and AG-News (single-label output), 20 tokens for SciERC (single-digit output), 128 tokens for Spider (SQL query generation), and 128 tokens for SAMSum (summary generation).

\subsection{Integration with DualGuard}
\subsubsection{Introductions}
In Section \ref{sec:pp}. We integrate CommuteProp into the DualGuard framework~\cite{liu2025dualguard}.  DualGuard assumes that the adversaries can access to (1) the pre-trained model weights (publicly available), (2) the transmitted smashed data and gradients during training, and (3) an auxiliary dataset for training attack models. The attacker aims to reconstruct the private training data. Specifically, they introduce four attack mechanisms:

\textbf{SIP(Smashed-data Inversion based on Pre-trained model)}: The attacker trains an inversion model (a GRU-based decoder followed by an MLP projection) on an auxiliary dataset using the pre-trained head model. The inversion model learns to map smashed data (head model activations) back to input token sequences.

\textbf{TAG (Text Revealer from Gradients)}:  TAG reconstructs output labels by matching gradients. The attacker intercepts the gradients flowing back from the tail model to the server. Using the pre-trained tail model, the attacker initializes dummy label logits and iteratively optimizes them to minimize the distance between the dummy gradients (produced by the pre-trained tail on the dummy labels) and the true gradients. The recovered labels in LLM tasks substantially overlap with the input, enabling reconstruction of the private sequence.

\textbf{LAMP (Label-extracting Attack with Model Priors)}: LAMP extends TAG by incorporating stronger language model priors. Instead of optimizing over continuous logit space, LAMP alternates between (1) discrete token selection based on the current dummy logits and (2) gradient matching optimization.

\textbf{CPTA (Connecting Pre-trained Tail Attack)}: CPTA directly connects the pre-trained tail model to the server's output hidden states, bypassing the trained tail model entirely. The attacker simply feeds the intercepted server outputs into the publicly available pre-trained tail model and evaluates its generation against the private labels. 

\textbf{DualGuard Defense.} DualGuard defends against these attacks through 
two stages:

\textit{Stage~1: Local Warm-Up Parameter Space Transformation.} Before 
formal fine-tuning, a three-layer projection MLP is inserted between the head and 
tail models. The head, tail, and ProjectionMLP are jointly optimized on the client 
side via multi-task learning with three objectives: (1)~maximizing the SIP inversion 
loss to disrupt smashed-data reconstruction, (2)~maximizing the pre-trained tail's 
cross-entropy loss to deactivate gradient matching, and (3)~minimizing the standard 
LM loss to preserve task performance. LoRA adapters are used for 
parameter-efficient training on the resource-constrained client. The trunk model is 
not involved in this stage.

\textit{Stage~2: Global Fine-Tuned Parameter Space Retention.} During 
formal split-based fine-tuning, the head model is frozen to maintain the secure 
forward activations established in Stage~1. LoRA adapters on the server (trunk) and 
tail models are fine-tuned on the target dataset, while base model weights remain 
frozen. An anti-pre-trained-tail loss term ($\lambda_3 / \mathcal{L}_{\text{global-pt}}$) 
is added to prevent the tail from reverting to its vulnerable pre-trained parameter 
space. In our experiments, we replace the original synchronous training algorithm 
in Stage~2 with our CommuteProp algorithm.

\subsubsection{HyperParameters}

Our hyper-parameters are nearly the same as DualGuard. For split architecture, the original paper claims the $[2,12,2]$ for Qwen2-1.5B, which does not match the total block numbers(28). Thus, we replace it with [3,22,3]. Full hyper-parameters are detailed in Table \ref{tab:dualguard-hp}.

\begin{table}[ht]
\footnotesize
\centering
\caption{Hyperparameter configuration for DualGuard + CommuteProp experiments.}
\label{tab:dualguard-hp}
\begin{tabular}{@{}l|l@{}}
\toprule
\textbf{Setting} & \multicolumn{1}{c}{\textbf{Hyperparameters}} \\
\midrule
Model \& Data & base = Qwen2-1.5B, split = [3,22,3], max\_seq\_len = 512, dataset = GSM8k\cite{cobbe2021gsm8k} \\
LoRA & $r = 2$, $\alpha = 32$, dropout = 0, targets = \{q\_proj, v\_proj\} \\
\midrule
Warm-Up & epoch = 2, bs = 4, lr = $2\times10^{-5}$, AdamW, wd = 0.01, $\lambda_1 = 40$, $\lambda_2 = 80$ \\
Fine-Tuning & epoch = 4, bs = 4, lr = $2\times10^{-5}$, AdamW, wd = 0.01, $\lambda_3 = 10$, $\tau = 4$ \\
\midrule
Attack Eval. & samples = \{SIP: 131, TAG/LAMP: 40\}, iters = \{TAG: 500, LAMP: 200 ($\times$30 inner)\}, PT-tail wt = 70.0 \\
\bottomrule
\end{tabular}
\end{table}
\subsubsection{Experiment Result}
\label{sec:more_result}
In the \textit{LoRA} setting, LoRA adapters ($r{=}2$) are 
injected into all model segments; only adapter weights on the server and tail  are updated while base parameters remain frozen. In the \textit{full-parameter} setting, the server is trained without LoRA---all base weights are updated---while head and tail remain LoRA-based from Stage~1. 
\begin{table}[htbp]
  \centering
  \small
  \renewcommand{\arraystretch}{0.85}
  \caption{Performance comparison under different attack methods.}
  \label{tab:attack_performance}
  \begin{tabular*}{\textwidth}{@{\extracolsep{\fill}}lcccccc}
    \toprule
    & \multicolumn{2}{c}{\textbf{Performance ($\uparrow$)}} & \multicolumn{4}{c}{\textbf{Attack (Rouge-L-F \%, $\downarrow$)}} \\
    \cmidrule(lr){2-3} \cmidrule(lr){4-7}
    Method & Rouge-L & METEOR & SIP & TAG & LAMP & CPTA \\
    \midrule
    w/o defense (naive USL)       & 86.23 & 88.48 & 86.87 & 65.70 & 63.13 & 17.75 \\
    Full-param DualGuard (sync)   & 85.76 & 88.13 & 5.64  & 38.67 & 38.26 & 0.53  \\
    Full-param CommuteProp ($\tau{=}4$) & 84.93 & 87.51 & 5.64 & 9.45  & 11.54 & 0.00 \\
    Full-param CommuteProp(NS)    & 84.89 & 87.25 & 5.64  & 23.15 & 23.21 & 5.43  \\
    LoRA DualGuard (sync)& 85.12 & 86.92 & 5.64 & 5.06 & 5.81 & 3.54 \\
    LoRA CommuteProp ($\tau{=}4$) & 85.18 & 87.13 &  5.64 & 4.34 & 4.57 & 3.58 \\
    
    LoRA CommuteProp(NS)  & 85.29 & 87.17 & 5.64  & 6.66  & 6.20  & 6.76  \\
   
    \bottomrule
  \end{tabular*}
\end{table}

The results indicate that CommuteProp achieves stronger privacy protection than synchronous DualGuard with negligible utility degradation. We attribute this improvement to the implicit 
regularization induced by stale activations: the temporal offset  between the head output and the server's computation introduces a controlled  distributional shift, further decorrelating the server's output from the pre-trained model's expectations and thereby weakening gradient-matching attacks.

\section{Additional Results}
\label{sec:addition_basemodel}
\subsection{Addition Experiment Result on Different Base-models}

\textbf{Setup} We additionally run experiments with a larger base model, Llama3.2-3B-Instruct\cite{grattafiori2024llama}, keeping the same U‑shaped split [3, 22, 3] and all hyper‑parameters from Appendix B (batch size 16, cosine learning rate schedule, etc.). To stabilise asynchronous training on this larger model, we introduce gradient clipping with a maximum global norm of 1.0. Using the same profiled one‑way communication time and technique in Section \ref{sec:iso_time}, we obtain a speed‑up factor of $1.98\times$.

\begin{table}[htbp]
  \centering
  \small
  \captionsetup{font=small}
  \caption{Per‑stage computation costs profiled over the first 50 iterations For Llama3.2-3B model}
  \label{tab:stage_cost}
  \begin{adjustbox}{width=\linewidth, center}
  \begin{tabular}{@{}lll@{}}
    \toprule
    Method & Stage cost (s) & Notes \\
    \midrule
    Sync baseline & $t_{1,s}+t_{2,s}=0.107$ & unified forward+backward \\
    \multirow{2}{*}{CommuteProp (NS)} 
      & Phase~1: $t_{1,1}+t_{2,1}=[0.0041,0.0207]$ & \\
      & Phase~2: $t_{1,2},t_{2,2}=[0.0406,\,0.0701]$ & separation of computation and update \\
    \bottomrule
  \end{tabular}
  \end{adjustbox}
\end{table}

\textbf{Results} Due to limited computational resources and time constraints, we were unable to exhaustively evaluate the full grid of baselines and tasks. For the experiments we conducted, we set the number of training epochs to 1 for AGNews. The results, reported in Table \ref{tab:llama}, — consistently demonstrates that CommuteProp (with NS preconditioner) preserves utility close to the synchronous AdamW baseline while effectively hiding communication latency.

\begin{table}[H]
  \vspace{-0.5cm}
  \centering
  \small
  \captionsetup{font=small}
  \renewcommand{\arraystretch}{0.8}
  \caption{Utility comparison under full-training and Equal time budgets}
  \label{tab:llama}
  \begin{tabular*}{\textwidth}{@{\extracolsep{\fill}}lccc}
    \toprule
    \textbf{Algorithm} & \textbf{Emotion (Acc)} & \textbf{AGNews (Acc)} & \textbf{SciERC (F1 Score)} \\
    \midrule
    Zero-shot                & 40.8  & 68.00    & 4.55   \\
    AdamW                    & 93.75 & 96.50   & 82.94   \\
    Nadam                    & 65.25 & 88.12   & 12.84    \\
    AdamW (Time constraint)  & \textbf{93.65} & 95.00   & 76.84   \\
    CommuteProp (NS)         & 93.41 & \textbf{95.21}   & \textbf{83.40}\\
    \bottomrule
  \end{tabular*}
\end{table}


\subsection{Empirical Validation of Cross-Block Hessian Structure}
\label{sec:hessian}

To empirically examine Assumption~\ref{assum:cross_hessian}, we directly measure the Hessian statistics of the evaluated split architectures, including Qwen2.5-1.5B. Table~\ref{tab:hessian} reports the element-wise statistics of the diagonal (within-block) and cross-block Hessian entries. The results show that cross-block interactions are substantially weaker than the dominant within-block terms, supporting the hierarchical smoothness assumption used in our analysis. This observation is also consistent with prior studies on the approximately block-diagonal Hessian structure of large neural networks \cite{zhang2024transformers,dong2025towards}.

Following the empirical protocols in \cite{zhang2024transformers,dong2025towards}, we use element-wise Hessian statistics to support the structural assumption rather than directly estimating the block operator norm, whose computation is prohibitive at the billion-parameter scale. We emphasize that our convergence and performance results are obtained from full-training runs and are independent of this estimate.

\begin{table}[htbp]
  \centering
  \footnotesize
  \captionsetup{font=footnotesize, labelfont=bf}
  \caption{Measured element-wise Hessian statistics on the evaluated Qwen2.5-1.5B split architectures.}
  \label{tab:hessian}
  \begin{tabular}{lccc}
    \toprule
    \textbf{Statistic} & \textbf{Diagonal} & \textbf{Cross-block} & \textbf{Cross/Diagonal} \\
    \midrule
    Max    & $1.49 \times 10^{-2}$ & $7.72 \times 10^{-3}$ & 0.39 \\
    P99    & $6.09 \times 10^{-3}$ & $3.11 \times 10^{-4}$ & 0.05 \\
    Median & $4.05 \times 10^{-5}$ & $2.81 \times 10^{-6}$ & 0.07 \\
    \bottomrule
  \end{tabular}
\end{table}


\end{document}